\DeclareMathOperator{\EP}{\mathbb{P}}
\DeclareMathOperator{\EX}{\mathbb{E}}
\DeclareMathOperator{\II}{\mathds{1}}
\DeclareMathOperator{\argmax}{argmax}
\DeclareMathOperator{\kmax}{K_{\max}}
\newcommand\EU{\mathit{EU}}
\newcommand\BR{\mathit{BR}}
\newcommand\NR{\mathit{NR}}
\newcommand\NB{\mathit{NB}}
\newcommand\given[1][]{\:#1\vert\:}
\newcommand{\name}{IRS\xspace}
\newcommand{\reftab}[1]{Table \ref{#1}}
\newcommand{\refsec}[1]{\S\ref{#1}}
\newcommand{\refsecs}[2]{\S\ref{#1}--\S\ref{#2}}
\newcommand{\refapp}[1]{Appendix \ref{#1}}
\newcommand{\refequ}[1]{Equation (\ref{#1})}
\newcommand{\refalg}[1]{Algorithm (\ref{#1})}
\newcommand{\refthm}[1]{Theorem (\ref{#1})}
\newcommand{\reflemm}[1]{Lemma (\ref{#1})}
\newtheorem{theorem}{Theorem}%
\newtheorem{definition}[theorem]{Definition}%
\newtheorem{lemma}[theorem]{Lemma}%
\newtheorem{proposition}[theorem]{Proposition}%
\renewcommand\footnotetextcopyrightpermission[1]{}
\title[\name]{\name: An Incentive-compatible Reward Scheme for Algorand}
\author{Maizi Liao}
\affiliation{
  \institution{University of Waterloo}
  \city{Waterloo}
  \country{Canada}}
\email{m7liao@uwaterloo.ca}
\author{Wojciech Golab}
\affiliation{
  \institution{University of Waterloo}
  \city{Waterloo}
  \country{Canada}}
\email{wgolab@uwaterloo.ca}
\author{Seyed Majid Zahedi}
\affiliation{
  \institution{University of Waterloo}
  \city{Waterloo}
  \country{Canada}}
\email{smzahedi@uwaterloo.ca}
\begin{abstract}
Founded in 2017, Algorand is one of the world's first carbon-negative, public blockchains inspired by proof of stake.
Algorand uses a Byzantine agreement protocol to add new blocks to the blockchain.
The protocol can tolerate malicious users as long as a supermajority of the stake is controlled by non-malicious users.
The protocol achieves about 100x more throughput compared to Bitcoin and can be easily scaled to millions of nodes.
Despite its impressive features, Algorand lacks a reward-distribution scheme that can effectively incentivize nodes to participate in the protocol.
In this work, we study the incentive issue in Algorand through the lens of game theory.
We model the Algorand protocol as a Bayesian game and propose a novel reward scheme to address the incentive issue in Algorand.
We derive necessary conditions to ensure that participation in the protocol is a Bayesian Nash equilibrium under our proposed reward scheme even in the presence of a malicious adversary.
We also present quantitative analysis of our proposed reward scheme by applying it to two real-world deployment scenarios.
We estimate the costs of running an Algorand node and simulate the protocol to measure the overheads in terms of computation, storage, and networking.
\end{abstract}
\begin{document}

\pagestyle{fancy}
\fancyhead{}
\fancyfoot{}

\maketitle 
\thispagestyle{empty}

\section{Introduction}

The concept of blockchain is first popularized by Bitcoin \cite{Nakamoto2008} as a tamper-resistant distributed transaction ledger.
Blockchains could be classified into two categories: permissioned and permissionless.
Permissioned blockchains, also known as private blockchains, implement an access-control mechanism to restrict unauthorized users from accessing the ledger \cite{CL1999,Lamport2001,OO2014}.
Examples include HyperLedger Fabric \cite{ABB+2018} and Libra (now called Diem) \cite{Diem}.
In contrast, permissionless blockchains do not impose any access restrictions \cite{helliar2020, liu2019}.
Examples include Bitcoin \cite{Nakamoto2008} and Ethereum \cite{wood2014ethereum}.

When obtaining permission is not required, the system could become prone to Sybil attacks%
\footnote{In a Sybil attack, the attacker creates a large number of pseudonymous identities to gain disproportionate control and/or influence over the system.}.
To mitigate the Sybil attack threat, permissionless consensus protocols often use additional mechanisms.
For example, Bitcoin uses proof of work (PoW), which requires nodes to solve a computationally intensive puzzle.
Winners earn the right to add blocks to the blockchain and collect rewards for their computational effort.
PoW suffers from high energy and computational costs \cite{BitcoinEnergy}.
Proof of stake (PoS) has been proposed to mitigate these costs \cite{BitcoinWiki-PoS, KN2012}.
In most PoS consensus protocols, nodes stake their cryptocurrency assets to gain rights to add blocks and earn rewards.

Inspired by proof of stake, Algorand is one of the world's first carbon negative blockchain protocols \cite{carbonNegative}.
The Algorand blockchain runs a randomized, committee-based consensus protocol \cite{GHM+2017, CM2019}.
The core of the protocol is a Byzantine agreement protocol that allows nodes to reach consensus on a new block in the presence of Byzantine faults%
\footnote{Byzantine faults cause a node to inconsistently appear both failed and functioning to others.
The term ``Byzantine'' is taken from the ``Byzantine general problem'' \cite{LSP1982}.}.
Nodes are selected randomly to participate in the Byzantine agreement protocol as committee members.
The original reward scheme of Algorand%
\footnote{Algorand is moving from its original reward scheme to a new reward scheme called the Governance Rewards \cite{governance-rewards}.
Under the new reward scheme, only agents who commit to participate in the governance of the Algorand ecosystem will be rewarded.
Agents have to prove their commitment by locking their cryptocurrency assets for a potentially long term.
This new scheme is in line with proof-of-stake protocols.}
rewards all nodes proportionally to their account balance.
Although simple, this reward scheme suffers from the free-rider problem: nodes have no incentive to participate in the protocol as doing so imposes computational and communication costs.
This can be seen by tracking the number of nodes that actively participate in Algorand.
According to \cite{AlgoMetric}, in May 2022, only about 1.6 billion units of Algorand's cryptocurrency were registered to participate while a total of about 8.4 billion units of Algorand's cryptocurrency were available.
Moreover, while there were more than 1.7 million active accounts, only 361 unique accounts had recently participated in Algorand's consensus protocol.
Since the safety and liveness of Algorand depend mainly on high participation of nodes, the lack of participation poses a serious threat to Algorand by making it prone to attacks from malicious participants.

There exist other proof of stake based consensus protocols which have different properties comparing to Algorand.
Avalanche's consensus protocol \cite{AVA} is inspired by gossip protocols and does not need a committee.
Tendermint \cite{BKM2019} is committee-based consensus protocol but the membership of the committee is deterministic and public.
The consensus protocols of Ethereum 2, Polkadot and Cardano are similar. They use proof of stake to randomly select block proposers and then use a chain selection rule to resolve forks \cite{Gasper, CasperFFG, BRL+, GRANDPA, GHOST, Ouroboros, OuroborosPraos}.

The incentive problem in Bitcoin-like blockchains has been studied extensively in recent years \cite{ES2014, SBB+2016, PS2017, LS2019, CHL2021}.
However, the results do not apply to Algorand due to its unique consensus protocol.
In this paper, we propose \name, an incentive-compatible reward scheme for Algorand.
Under \name, nodes' participation is monitored through committee votes.
These votes serve as evidence of the committee members' participation in the protocol.
In addition, \name requires committee members to include with their vote the identities of the nodes from which they have received a valid message.
This facilitates the monitoring of the collaboration of nodes that are not chosen as committee members.
We model Algorand as a Bayesian game and study nodes' strategies under our proposed reward scheme.
Our analysis considers an adversary that can corrupt nodes in a probabilistic manner.
We show that if certain conditions are met, all nodes are incentivized to participate in the protocol regardless of being selected as committee members.

In summary, we make the following contributions. In \refsec{sec:cost-model}, we present a detailed cost model for nodes' participation in Algorand. In \refsec{sec:algorand-game}, we model the Algorand protocol as a Bayesian game. In \refsec{sec:analysis}, we propose \name, a novel incentive-compatible reward scheme to address the free-rider problem. In \refsec{sec:analysis}, we study equilibrium strategies under \name and derive necessary conditions to ensure nodes participation. In \refsec{sec:impl}, we present detailed implementation requirements for real-world deployment of \name. Finally, in \refsec{sec:exp}, we quantitatively analyze our proposed reward scheme and simulate the protocol to measure its overheads in terms of computation, storage, and networking.

\section{Algorand Protocol}
\label{sec:algorand-prot}

The Algorand protocol maintains a permissionless blockchain.
In Algorand, adding a new block to the blockchain requires multiple steps.
\refalg{alg:algorand} provides high-level pseudocode of the Algorand protocol (please see \refapp{sec:algorand} for an in-depth overview of the protocol).
At each step, all nodes wait for the messages from the previous step for a fixed amount of time.
Each node then validates and propagates received messages to its neighbors.
The protocol can terminate at specific steps (i.e., $k > 4$ where $k \not\equiv 1 \pmod 3$) if a termination condition is met (i.e., enough votes are received or the final step, $\kmax$, is reached).
At each non-terminal step, a random committee of self-selected nodes is formed.
Committee members generate and propagate a message according to the protocol.
The message is a block proposal if $k = 1$, and its a vote on a proposed block if $k > 1$.

\noindent
\textbf{Cryptographic sortition.}
In Algorand, nodes are assumed to have access to a unique signature scheme (e.g., \cite{micali1999verifiable}).
As shown in \refalg{alg:algorand}, at step $k$, given a publicly known random seed, $Q$, each node privately computes a hashlen-bit-long random string $x_{k} = H(\text{SIG}(k, Q))$ by digitally signing $(k, Q)$ and then hashing it using a random oracle $H$ \cite{goldreich2007foundations}.
The string $x_{k}$ is interpreted as a binary expansion of a number between 0 and 1, denoted by $.x_{k} = x_k / 2^{\text{hashlen}}$.
If this number is less than a known threshold, $p_c$, then the node is a member of the committee at step $k$.
The threshold is set such that the expected size of the committee is $\tau$ (i.e., $p_c = \tau / n$, where $n$ is the number of nodes in the system).
$.x_{k}$ is also used to represent the priority of the node; the smaller $.x_{k}$ is the higher the priority of the node will be.
For nodes in the committee, $\sigma_{k} = \text{SIG}(k, Q)$ is the committee credential.
Committee members propagate their credential alongside their generated message.

\begin{algorithm}[!t]
\SetKwFunction{Sort}{Sortition}%
\SetKwProg{Proc}{procedure}{ begin}{end}%
\For{$k = 1, \dots, \kmax$}{
    \lIf{$k > 1$}{
        Validate and gossip step-($k$ - 1) messages%
    }
    
    \lIf{$k > 4$ and $k \not\equiv 1 \pmod 3$}{
        Exit if termination condition for step $k$ is met%
    }
    
    $(x_{k}, \sigma_{k})$ $\gets$ \Sort{$k$}\;
    \If{$.x_{k}$ $\le$ $p_c$}{
        message $\gets$ Generate a message\;
        Propagate message and $\sigma_{k}$\;
    }
}
\Proc{\Sort{$k$}}{
    $\sigma$ $\gets$ $\text{SIG}(k, Q)$\;
    $x$ $\gets$ $H(\sigma)$\;
    \Return ($x$, $\sigma$)\;
}
\caption{High-level pseudocode for Algorand}
\label{alg:algorand}
\end{algorithm}

\noindent
\textbf{Gossip network and protocol.}
In Algorand, each node is provided with an \emph{address-book} file containing the IP address and the port number of other nodes.
Nodes form a gossip network by selecting a subset of $n_{rp}$ random peers to gossip messages to.
The parameter $n_{rp}$ depends on the number of nodes, and it is set such that the gossip network is strongly connected.
Messages are disseminated on the gossip network using a gossip protocol.
The message dissemination is initiated by committee members at each step.
Each committee member propagates their generated message to their randomly selected peers.
Those peers then forward the message to their own peers.
And this process continues until the message is received by all the nodes in the network.
To avoid forwarding loops, nodes do not propagate the same message twice.
\section{Preliminaries}

\subsection{Adversary Model}
The committee is guaranteed to reach Byzantine agreement \cite{pease1980reaching, dolev1983authenticated, fischer1983consensus} in the presence of an adversary that can corrupt nodes and control their actions.
The Algorand protocol is resilient to such adversary as long as it cannot corrupt more than 1/3 of the nodes.
This is achieved by setting the expected size of the committee, $\tau$, such that, with high probability, at least 2/3 of the committee members are non-Byzantine nodes.
In this paper, however, we consider a slightly different adversary model.
In particular, we assume that the adversary corrupts each node with a fixed probability $p_b < 1/3$.
Under our probabilistic adversary model, it is possible for the adversary to corrupt more than 1/3 of nodes ex post.
However, we show in \refsec{sec:analysis} that for large systems, under our adversary model, non-Byzantine nodes still constitute more than 2/3 of the committee with high probability.
Consequently, Algorand protocol is guaranteed to reach Byzantine agreement with high probability.

\noindent
\textbf{Non-Byzantine supermajority.}
Since Algorand is a permissionless blockchain, the adversary can easily introduce as many new nodes as it wishes.
Therefore, instead of assuming that the system has at least a 2/3 majority of non-Byzantine nodes, it is often more meaningful to assume that at least 2/3 of the cryptocurrency assets are controlled by non-Byzantine nodes.
In other words, instead of assuming that the adversary can corrupt up to 1/3 of the nodes, it is often assumed that the adversary can control up to 1/3 of the assets in the blockchain.
Algorand achieves this by assigning \emph{sub-nodes} to each node in proportion to the balance of its account.
The cryptographic sortition algorithm then randomly selects each sub-node as a committee member.
In this paper, we present our analysis under the simpler assumption that each node has a single sub-node.
We then show how to modify the Algorand protocol and our analysis to consider the more realistic assumption that each node controls multiple sub-nodes.

\subsection{Network Model}
In this paper, we assume that the gossip network is strongly synchronous.
This is a widely adopted network assumption \cite{GHM+2017,BKM2019,ABP+2020,ABP+2020a} which states that all messages propagated initially by non-Byzantine nodes are received by all other non-Byzantine nodes within a known time period.
We further assume that the network remains strongly synchronous if a majority of nodes run the gossip protocol.
This means that for large systems, the adversary cannot launch an Eclipse attack \cite{heilman2015eclipse} with high probability.

\subsection{Cost Model}
\label{sec:cost-model}
Nodes running the Algorand protocol incur processing and communication costs at each step of the protocol.
These costs are measurable in quantitative terms (e.g., energy consumption)  and can be expressed in monetary values (e.g., cryptocurrency or Dollar).

We denote the total cost incurred by any node $i$ at step $k$ by $C_i(k)$.
This cost has two components: (a) baseline costs, $C_i^b(k)$, and (b) committee costs, $C^c(k)$.
We model $C_i(k)$ as follows.
\begin{equation}
\label{equ:cost}
C_i(k) = C_i^{b}(k) + C^{c}(k) \times \II(.x_{i,k} < p_c)\footnotemark.
\end{equation}
\footnotetext{$\II(\cdot)$ is the indicator function which returns $1$ if the condition is true, and $0$ otherwise.}
$C_i^{b}(k)$ represents the baseline costs that do not depend on whether node $i$ is selected as a committee member at step $k$.
Examples include costs of running cryptographic sortition and propagating messages.
$C^{c}(k)$ represents the committee costs incurred by a node when it is selected as a committee member at step $k$.
Examples include costs of generating blocks and votes.

\section{The Algorand Game}
\label{sec:algorand-game}
To study nodes' incentives, we model the participation of nodes in the Algorand protocol as a Bayesian game.
We formally define the Algorand game and describe nodes' strategies and utilities.
We then discuss solution concepts for our proposed game.

\subsection{Game Model}
We model the Algorand protocol as a Bayesian game.
A Bayesian game consists of a set of agents.
Each agent has a type and a set of available actions.
Agents do not know their types before the start of the game.
They, however, know a common prior probability distribution over types.
At the beginning of the game, each agent privately observes its own type.
Agents then simultaneously take their actions without knowing each others' types.
Finally, agents receive a real-valued utility (a.k.a. payoff) given their joint types and actions.
A Bayesian game is formally defined as follows.
\begin{definition}[Bayesian Game \cite{SL2009}]
A Bayesian game is represented by a tuple $(N, A, \Theta, p, u)$ where:
\begin{itemize}[noitemsep=1pt,topsep=2pt,leftmargin=*]
    \item $N = \{ 1, \dots, n \}$ is a set of agents;
    \item $A = A_1 \times \dots \times A_n$, where $A_i$ is a set of actions available to agent $i$;
    \item $\Theta = \Theta_1 \times \dots \times \Theta_n$, where $\Theta_i$ is the type space of agent $i$;
    \item $P: \Theta \mapsto [0, 1]$ is a common prior over types; and
    \item $u = (u_1, \dots, u_n)$, where $u_i: A \times \Theta \mapsto \mathbb{R}$ is the utility for agent $i$.
\end{itemize}
\end{definition}

\noindent
\textbf{Agents and actions.}
In our setting, agents represent Algorand nodes.
Agents are assumed to be rational in the sense that they selfishly choose an action to maximize their utility function.
We consider three actions: (i) cooperate, $C$, (ii) defect, $D$, (iii) misbehave, $M$.
A cooperative agent fully runs the Algorand protocol's code and consequently incurs all the processing and communication costs associated with it.
A defective agent does not run any code (e.g., logs off from the system) and incurs no costs.
A misbehaving agent runs a malicious code to sabotage the system.
Note that the malicious code can imitate the behaviour of cooperating or defecting.
Non-Byzantine agents that are not corrupted by the adversary do not misbehave.
They only choose between cooperating and defecting.
They cooperate if and only if their expected rewards exceed their expected costs.
Byzantine agents, however, always misbehave (i.e., they run the adversary's malicious code).

Formally, we denote the action of agent $i$ by $a_i \in A_i = \{ C, D, M \}$.
A vector of actions $a = (a_1, \dots, a_n) \in A$ is called an action profile.
An action profile $a$ can be written as $(a_i, a_{-i})$, where $a_{-i}$ is an action profile without agent $i$'s action%
\footnote{Throughout the paper, we use $-i$ to denote all agents except agent $i$.}.

\noindent
\textbf{Types.}
Type of agent $i$ is defined to be $\theta_i = (\theta_{i,0}, \theta_{i,1}, \dots, \theta_{i,\kmax})$
where $\theta_{i,0} \in \{0, 1\}$ indicates if agent $i$ is corrupted by the adversary (1 if $i$ is Byzantine and zero otherwise), and $\theta_{i,k}$ is a number between 0 and 1 represented in binary by the hash result of the sortition algorithm run by agent $i$ at step $k = 1, \dots, \kmax$ ($\theta_{i,k} = .x_{i,k}$, where $x_{i,k}$ is returned by $\texttt{Sortition}_i(k)$).
The type space of agent $i$ is denoted as $\Theta_i$.
At the beginning of the Algorand game, each agent observers whether it is corrupted by the adversary.
Agents also receive a random seed for the cryptographic sortition algorithm.
Given the random seed, agents can run the sortition algorithm for all steps to know what exactly their type vector is.

\noindent
\textbf{Prior probabilities.}
The probability that agent $i$ is corrupted by the adversary is $\EP(\theta_{i,0} = 1) = p_b$.
For $k = 1, \dots, \kmax$, $\theta_{i,k}$'s are drawn independently from a uniform distribution between 0 and 1.
Therefore, the probability that agent $i$  is  selected as a committee member is $\EP(\theta_{i,k} \le p_c) = p_c$.
The adversary corrupts agents independently.
Agents also are selected as committee members at any given step independently.

\noindent
\textbf{Utility functions.}
For agent $i$, the utility function, $u_i$, maps action profiles, $a=(a_1,\dots, a_n)$, and type vectors, $\theta = (\theta_1,\dots,\theta_n)$, to real-valued payoffs.
If $\theta_{i,0} = 1$, then we assume that $u_i(a, \theta)$ is $-B$ if $a_i \in \{C, D\}$ and $0$ if $a_i = M$ where $B$ is a large real number.
Under this assumption, regardless of their type, Byzantine agents always prefer $M$ to $C$ and $D$.
To model non-Byzantine agents preferences, we assume that if $\theta_{i,0} = 0$, then $u_i(a, \theta) = -B$ if $a_i = M$.
For $a_i \in \{C, D\}$, the utility of non-Byzantine agents is equal to the rewards they receive minus the costs they incur.
We analyze utility functions in more details in \refsec{sec:analysis}.

\subsection{Strategies and Equilibria}
\label{sec:strategies-equilibria}
A strategy defines a description of how a game would be played in every contingency.
In a Bayesian game, a strategy prescribes a distribution over actions for every type that an agent could have.
Let $\Delta(A_i)$ be the set of all probability distributions over $A_i$.
For agent $i$, a strategy $s_i: \Theta_i \mapsto \Delta(A_i)$ is a mapping from agent $i$'s types to distributions over agent $i$'s actions.
The set of all strategies for agent $i$ is denoted by $S_i$.
By $s_i(a_i \given \theta_i)$, we indicate the probability that agent $i$ takes action $a_i$ under $s_i$ given that agent $i$'s type is $\theta_i$.
Similar to action profiles, a strategy profile $s = (s_1, \dots, s_n) \in S$ is a vector of strategies, where $S = S_1\times \dots \times S_n$ is a set of all possible strategy profiles.

\noindent
\textbf{Expected utilities.}
In Bayesian games, there are two main sources of uncertainty: (i) types and (ii) actions.
Types are drawn from the prior probability distribution, $P$, and actions are taken based on agents' strategies.
To capture both sources of uncertainty, the ex ante expected utility of agent $i$ is modeled as follows.
\begin{align*}
\EU_i(s) = & \EX_{a,\theta}[u_i(a, \theta)] = p_b \; \EX_{a, \theta}[u_i(a,\theta) \given \theta_{i,0}=1] \\
& + (1-p_b) \; \EX_{a, \theta}[u_i(a,\theta) \given \theta_{i,0}=0] .
\end{align*}
The expectation is taken with respect to $\theta$ and $a  \sim s(\cdot \given \theta)$.
This formula models the expected utility of agent $i$ before the start of the game and before the agent observes its type.

Given the defined expected utility model, we can define the set of agent $i$'s best responses to strategy profile $s_{-i}$ as:
\[
\BR_i(s_{-i}) = \underset{s_i \in S_i}{\argmax} ~~ \EU_i((s_i, s_{-i})).
\]
Intuitively, a best response is a strategy which provides the highest expected utility given the strategy of others.
Note that there may be more than one strategy that maximizes agent $i$'s expected utility for a given $s_{-i}$.

\noindent
\textbf{Bayesian Nash equilibrium.}
As discussed before, a strategy is a full contingency plan.
Agents \textbf{simultaneously} choose their strategies \textbf{before} the start of the game and do not change their adopted strategies during the game. 
Once the game starts, each agent observes its type and acts as prescribed by its strategy.
Agents strategies form a Bayesian Nash equilibrium (BNE) when the strategy of each agent is a best response to the strategies adopted by other agents.
Formally, a strategy profile $s^{*}$ is a BNE if and only if $s^{*}_i \in \BR_i(s^{*}_{-i})$, for all $i$. 
Informally, in a BNE, agent $i$ does not have any incentive to unilaterally change its strategy from $s^*_i$ if it knows that other agents have fixed their strategies to $s^{*}_{-i}$.

\section{Incentive Analysis in Algorand}
\label{sec:analysis}
In this section, we formulate agents' utilities and study their BNE strategies.
We first consider Algorand's original reward scheme.
We show that under this reward scheme, cooperation is not a BNE strategy.
We then propose a novel reward scheme and show that under certain conditions, our proposed reward scheme incentivizes all non-Byzantine agents to cooperate regardless of their type.

\subsection{Algorand's Original Reward Scheme}
Algorand's original reward scheme is called Participation Rewards \cite{governance-rewards}.
Under this reward scheme, \emph{The Algorand Foundation} distributes a fixed amount of cryptocurrency assets as a reward among all agents.
Agents are assigned sub-nodes in proportion to the balance of their accounts.
Under Algorand's original reward scheme, the fixed reward, $R$, is distributed equally among all sub-nodes.
If agent $i$ is assigned $w_i$ sub-nodes, then agent $i$'s reward, $R_i$, is equal to $R \; w_i / W$, where $W$ is the total number of sub-nodes in the system.
The first advantage of this reward scheme is its simplicity: it is easy to implement, and it is easy to explain to agents how they are rewarded.
The most important advantage of Algorand's original reward scheme is that it provides proportional rewards.

\begin{definition}{Proportional rewards.}
Let $R_i$ denote the expected reward of agent $i$.
A reward scheme provides proportional rewards if for any agent $i$ and $j \in N$, $R_i / R_j = w_i / w_j$.
\end{definition}

The proportional rewards property ensures that the expected fraction of assets controlled by the adversary does not increase by the action of the reward scheme.
Although Algorand's original reward scheme is simple and provides proportional rewards, it suffers from a key drawback: it fails to prevent free riding.
Agents do not have any incentive to cooperate as they receive their rewards irrespective of their cooperation.

\begin{theorem}
Let $s^*$ be a strategy profile where for each agent $i$, if $\theta_{i,0} = 0$, then $s_i^*(C \given \theta_i) = 1$, and otherwise, $s_i^*(M \given \theta_i) = 1$.
Under Algorand's original reward scheme, $s^*$ is not a BNE.
\end{theorem}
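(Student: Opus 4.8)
The plan is to show that $s^*$ violates the best-response condition for at least one agent, namely that cooperating is not a best response for non-Byzantine agents. Fix an arbitrary agent $i$ and consider the alternative strategy $s_i'$ that agrees with $s_i^*$ on Byzantine types (playing $M$ when $\theta_{i,0}=1$) but plays $D$ instead of $C$ on non-Byzantine types, i.e., $s_i'(D \given \theta_i) = 1$ whenever $\theta_{i,0}=0$. I would then prove that $\EU_i((s_i', s_{-i}^*)) > \EU_i((s_i^*, s_{-i}^*))$, which immediately gives $s_i^* \notin \BR_i(s_{-i}^*)$ and hence establishes that $s^*$ is not a BNE.

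First I would decompose the ex ante expected utility by conditioning on $\theta_{i,0}$, as in the formula for $\EU_i(s)$. On the Byzantine event $\theta_{i,0}=1$ both strategies play $M$ and yield identical payoffs, so that term cancels in the difference, and it suffices to compare the two strategies on the non-Byzantine event. Conditioned on $\theta_{i,0}=0$, under $s^*$ agent $i$ cooperates and its payoff equals its reward $R_i$ minus the costs $\sum_{k=1}^{\kmax} C_i(k)$ it incurs while running the protocol; under $s_i'$ agent $i$ defects, runs no code, and incurs zero cost, so its payoff is simply $R_i$.

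The crucial observation is that under Algorand's original reward scheme the reward $R_i = R\, w_i / W$ depends only on agent $i$'s balance, through its sub-node count $w_i$, and on the total $W$, neither of which is affected by agent $i$'s own action; the foundation's fixed reward $R$ is likewise exogenous. Consequently the reward is identical under both strategies, and a single unilateral deviation by agent $i$ changes neither the reward pool nor the reward others receive while $s_{-i}^*$ is held fixed. Taking expectations over the sortition outcomes, the difference in ex ante expected utility is exactly $(1-p_b)$ times the expected cost of cooperation, $\EX[\sum_{k=1}^{\kmax} (C_i^b(k) + C^c(k)\,\II(\theta_{i,k} \le p_c))]$, which is strictly positive whenever running the protocol carries nonzero cost. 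Hence defecting strictly improves upon cooperating for non-Byzantine agents, proving the claim.

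I expect the only delicate point to be formalizing the action-independence of the reward: I must argue carefully that dropping agent $i$'s participation leaves $w_i$, $W$, and $R$ unchanged, so that the cooperating non-Byzantine payoff $R_i - \sum_k C_i(k)$ is strictly below the defector's payoff $R_i$. This reduces to the stated cost model together with the mild assumption that at least one step incurs positive baseline or committee cost; the remainder is the routine decomposition of the Bayesian expectation above.
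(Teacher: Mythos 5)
Your proof is correct and follows essentially the same route as the paper's: a unilateral deviation to defection by a non-Byzantine agent, the observation that under the original scheme the reward $R\,w_i/W$ is independent of the agent's own action, and the strictly positive cost of cooperation. There is, however, one point where your construction is more careful than the paper's, and it is worth noting. The paper's deviation $s_i'$ plays $D$ for \emph{all} types, including $\theta_{i,0}=1$; given the stated utilities (a Byzantine-type agent playing $D$ receives $-B$), the paper's claimed value $\EU_i((s_i', s_{-i}^*)) = (1-p_b)\,R/n$ silently drops the term $-p_b\,B$, and since $B$ is large that literal deviation would actually be \emph{worse} than $s_i^*$, not better. Your deviation instead agrees with $s_i^*$ on Byzantine types (playing $M$) and defects only when $\theta_{i,0}=0$, so the Byzantine terms cancel exactly in the utility comparison and the difference is precisely $(1-p_b)$ times the expected cooperation cost $\EX[\sum_{k=1}^{\kmax} (C_i^b(k) + C^c(k)\,\II(\theta_{i,k} \le p_c))]$. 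This makes your argument a clean repair of the paper's slightly sloppy one; the only assumption you share with the paper is that this expected cost is strictly positive, which the paper also invokes (``it receives its reward but incurs some strictly positive costs'').
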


\begin{proof}
Suppose agent $i$ is a non-Byzantine agent (i.e., $\theta_{i,0} = 0$).
If agent $i$ defects, it receives its rewards without incurring any costs.
Formally, $u_i(a,\theta) = R / n$ for all $\theta \in \Theta$ if $a_i = D$%
\footnote{We present the proof for the case where all agents are assigned a single sub-node.
Our proof easily extends to the case where agents are assigned different number of sub-nodes.}.
Let $s^\prime_i$ be a strategy that chooses $D$ regardless of the type (i.e., $s^\prime_i(D \given \theta_i) = 1$ for all $\theta_i \in \Theta_i$).
It can be easily shown that $\EU_i((s^\prime_i, s^*_{-i})) = (1 - p_b) \; R / n$.
If a non-Byzantine agent $i$ cooperates, it receives its reward but incurs some strictly positive costs.
This means that $u_i(a,\theta) < R / n$ if $a_i = C$.
Let $a^*(\theta)$ be an action profile where $a^*_i$ is $C$ if $\theta_{i,0} = 0$ and $M$ otherwise.
The expected utility of agent $i$ for $s^*$ is:
\begin{align*}
\EU_i(s^*) & = (1-p_b) \; \EX_{\theta}[u_i(a^*(\theta),\theta) \given \theta_{i,0}=0] \\
& < (1 - p_b) \; R / n = \EU_i((s^\prime_i, s^*_{-i})).
\end{align*}
This implies that $s^*_i$ is not a best response to $s^*_{-i}$.
Therefore, $s^*$ is not a BNE under Algorand's original reward scheme.
\end{proof}

\subsection{Incentive-compatible Reward Scheme (\name)}
\label{sec:proposed-reward}

To address the free-rider problem, we propose \name, a novel incentive-compatible reward scheme for Algorand.
Under \name, the Algorand Foundation distributes rewards among agents based on their cooperation.
The cooperation of committee members can be easily tracked as their messages are guaranteed to reach all other agents.
However, tracking the cooperation of the agents that are not selected as a committee member is challenging as they do not initiate any messages.
To address this challenge, our proposed reward scheme requires committee members at step $k = 2,\dots,\kmax$ to include with their vote the identities of the agents from which they have received a valid message at step $k - 1$.

Let $R^c(k)$ and $R^{b}(k)$ denote a fixed baseline reward and a fixed committee reward at step $k$, respectively.
Under \name, a cooperating agent $i$ receives a committee reward of $R^c(k)$ if it is selected as a committee member at step $k$.
Additionally, if agent $i$'s identity is included in the vote generated by agent $j \in N_i$ at step $k$, then agent $i$ receives a baseline reward of $R^{b}(k) / n_{rs}$.
We assume that Byzantine agents do not include the identity of non-Byzantine agents in their vote when they are selected as committee members.
In other words, non-Byzantine agents do not receive any baseline reward for propagating messages to their Byzantine peers.
We make this assumption to calculate a lower bound on the expected utility of non-Byzantine agents.
If Byzantine agents include the identity of non-Byzantine agents, the expected utility of non-Byzantine agents can only increase.
Given this assumption, the total reward of a cooperating agent $i$ at step $k$ given an action profile $a$ and a type vector $\theta$ can be formulated as:
\begin{align}
\label{equ:our_reward}
R_i(a, \theta, k) & = (R^{b}(k) / n_{rs}) \sum_{j \in N_i} \II(\theta_{j,k} \le p_c, a_j = C) \\
& + R^c(k) \; \II(\theta_{i,k} \le p_c).
\end{align}

To show that our reward scheme prevents the free-rider problem, we first prove that any non-Byzantine agent cannot unilaterally change the outcome of each step by defecting.
Given that, we formulate the expected utility of each non-Byzantine agent assuming that all other non-Byzantine agents cooperate regardless of their type.
We then prove that if certain conditions are met, cooperation is a best response for a non-Byzantine agent when other non-Byzantine agents cooperate.
This then shows that cooperation is a BNE strategy for non-Byzantine agents under \name.

\begin{lemma}
\label{lemm:chernoff}
Let $\NR_k$ and $\NB_k$ be random variables indicating the number of non-Byzantine and the number of Byzantine agents selected as committee members at any step $k$, respectively.
Let $\mu_r = \EX(\NR_k)$, $\mu_b = \EX(\NB_k)$, $\mu = \mu_b + \mu_r / 2$, $\delta_r = 1 - T / (1 - p_b)$, and $\delta = 2 \times T / (1 + p_b) - 1$.
Then we have:
\begin{itemize}
 \item $\EP(\NR_k \le T \; \tau) \le e^{-\mu_r \; \delta_r^2 / 2}$, and
 \item $\EP(\NB_k + \NR_k / 2 \ge T \; \tau) \le e^{-\mu \; \delta^2 / (2 + \delta)}$.
\end{itemize}
\end{lemma}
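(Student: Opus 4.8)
The plan is to recognize both statements as standard Chernoff tail bounds once the relevant sums of independent bounded random variables are identified and the given $\delta_r$ and $\delta$ are matched to the usual $(1-\delta_r)\mu_r$ and $(1+\delta)\mu$ parametrizations. First I would fix the probabilistic structure from the prior: each agent is independently Byzantine with probability $p_b$ and, independently of that, selected as a committee member with probability $p_c = \tau/n$. Writing $Y_i = \II(\theta_{i,0}=0,\ \theta_{i,k} \le p_c)$ and $Z_i = \II(\theta_{i,0}=1,\ \theta_{i,k}\le p_c)$, we have $\NR_k = \sum_i Y_i$ and $\NB_k = \sum_i Z_i$, where the $Y_i$ are i.i.d.\ Bernoulli with parameter $(1-p_b)p_c$ and the $Z_i$ are i.i.d.\ Bernoulli with parameter $p_b p_c$. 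Hence $\mu_r = (1-p_b)\tau$, $\mu_b = p_b\tau$, and $\mu = \mu_b + \mu_r/2 = \tau(1+p_b)/2$.

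For the first bound I would simply verify that the threshold matches the Chernoff parametrization. Substituting $\delta_r = 1 - T/(1-p_b)$ gives $(1-\delta_r)\mu_r = (T/(1-p_b))\cdot(1-p_b)\tau = T\tau$. Since $\NR_k$ is a sum of independent Bernoulli variables with mean $\mu_r$, the standard lower-tail Chernoff inequality $\EP(\NR_k \le (1-\delta_r)\mu_r) \le e^{-\mu_r\delta_r^2/2}$, valid for $\delta_r \in (0,1)$ (i.e.\ whenever $T < 1-p_b$), yields the claim directly.

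The second bound is where I expect the main obstacle, because $\NB_k + \NR_k/2$ is not a sum of Bernoulli variables. I would set $W := \NB_k + \NR_k/2 = \sum_i W_i$ with $W_i = Z_i + Y_i/2$. Since each agent is either Byzantine or not, $Y_i$ and $Z_i$ are never simultaneously $1$, so $W_i \in \{0, 1/2, 1\} \subseteq [0,1]$; the $W_i$ are independent and $\EX(W) = \mu$. One checks $T\tau = (1+\delta)\mu$ for $\delta = 2T/(1+p_b) - 1$. The difficulty is that the textbook upper-tail Chernoff bound is stated for Bernoulli summands, so I would instead go through the moment-generating function directly. For $t > 0$ and $x \in [0,1]$, convexity of $e^{tx}$ gives $e^{tx} \le 1 + x(e^t - 1)$, hence $\EX(e^{tW_i}) \le 1 + \EX(W_i)(e^t-1) \le e^{\EX(W_i)(e^t-1)}$, which is exactly the Bernoulli MGF bound. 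Multiplying over $i$ gives $\EX(e^{tW}) \le e^{\mu(e^t-1)}$, and applying Markov's inequality to $e^{tW}$ and optimizing over $t = \ln(1+\delta)$ reproduces the usual derivation, giving $\EP(W \ge (1+\delta)\mu) \le e^{-\mu\delta^2/(2+\delta)}$ for $\delta > 0$.

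Finally I would note that the parametrizations are admissible in the regime of interest: for $T = 2/3$ and $p_b < 1/3$ one has $T < 1-p_b$ and $T > (1+p_b)/2$, so $\delta_r \in (0,1)$ and $\delta > 0$, and both bounds apply. The only genuinely non-routine step is the convexity argument that lets the $[0,1]$-valued weighted summands $W_i$ inherit the Bernoulli MGF bound; everything else is bookkeeping to match the stated constants.
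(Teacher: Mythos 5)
Your proposal is correct and follows the same overall route as the paper's proof: the same decomposition of $\NR_k$ and $\NB_k$ into independent per-agent indicator variables, the same computation $\mu_r = (1-p_b)\,\tau$ and $\mu_b = p_b\,\tau$, and the same algebraic matching of the threshold $T\tau$ with $(1-\delta_r)\,\mu_r$ and $(1+\delta)\,\mu$ before invoking Chernoff bounds. The one place you go beyond the paper is the second bound, and it is a genuine improvement in rigor: the paper simply writes $\EP(\NB_k + \NR_k/2 \ge (1+\delta)\,\mu) \le e^{-\mu\,\delta^2/(2+\delta)}$ as an application of ``the Chernoff bound,'' without acknowledging that $\NB_k + \NR_k/2$ is a sum of independent variables taking values in $\{0, 1/2, 1\}$ rather than Bernoulli summands, so the textbook statement does not apply verbatim. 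Your convexity step --- $e^{tx} \le 1 + x\,(e^t - 1)$ for $x \in [0,1]$, so each weighted summand inherits the Bernoulli moment-generating-function bound $\EX(e^{tW_i}) \le e^{\EX(W_i)\,(e^t-1)}$, after which Markov's inequality and the choice $t = \ln(1+\delta)$ give the stated tail --- is exactly the justification needed to make that invocation legitimate. Your closing verification that $\delta_r > 0$ and $\delta > 0$ in the regime $T > 2/3$, $p_b < 1/3$ likewise makes explicit what the paper merely assumes in passing (``since $\delta_r \ge 0$ \dots since $\delta \ge 0$''). In short: same approach, but your writeup closes a small gap the paper leaves open.
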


\begin{proof}
Let $X_{i,k}$ be a random variable that takes value 1 if agent $i$ is selected as a committee member and agent $i$ is not corrupted by the adversary, and takes value 0 otherwise.
Similarly, let $Y_{i,k}$ be a random variable that takes value 1 if agent $i$ is selected as a committee member and agent $i$ is corrupted by the adversary, and takes 0 otherwise.
We can write $\NR_k = \sum_i X_{i,k}$, and $\NB_k = \sum_i Y_{i,k}$.
We have $\EX(X_{i,k}) = (1 - p_b) \; p_c$, and $\EX(Y_{i,k}) = p_b \; p_c$.
Therefore, $\mu_r = \EX(\NR_k) = \sum_i \EX(X_{i,k}) = (1 - p_b) \; \tau$, and $\mu_b = \EX(\NB_k) = \sum_i \EX(Y_{i,k}) = p_b \; \tau$.
Since $\delta_r \ge 0$, according to Chernoff bound, the following inequality holds for any $k$.
\begin{align*}
\EP(\NR_k \le T \; \tau) & = \EP(\NR_k \le (1-\delta_r) \; (1 - p_b) \; \tau) \\
& = \EP(\NR_k \le (1-\delta_r) \; \mu_r) \le e^{-\mu_r \; \delta_r^2 / 2}.
\end{align*}
Similarly, since $\delta \ge 0$, we have the following inequality for any $k$.
\begin{align*}
\EP(\NB_k + \NR_k / 2 \ge T \; \tau) & = \EP(\NB_k + \NR_k / 2 \ge (1 + \delta) \; (1 + p_b) \; \tau / 2) \\
& = \EP(\NB_k + \NR_k / 2 \ge (1 + \delta) \; \mu) \\
& \le e^{-\mu \; \delta^2 / (2 + \delta)}.
\end{align*}
\end{proof}

We assume that $p_b$, $\tau$, and $T$ are set such that $\NR_k > T \; \tau$ and $\NB_k + \NR_k / 2 < T \; \tau$ are true with overwhelming probability.
For example, using \reflemm{lemm:chernoff}, with $p_b = 0.2$, $\tau = 4000$ and $T = 0.7$, the probability that  $\NR_k \le T \; \tau$ is less than $10^{-10}$ and the probability that $\NB_k + \NR_k / 2 \ge T \; \tau$ is less than $10^{-13}$.
The two inequalities imply that more than 2/3 of the selected committee members at each step are non-Byzantine agents with overwhelming probability.

\begin{proposition}
\label{thm:expected_steps}
Consider any agent $i \in N$.
Suppose that the strategy profile of all agents except agent $i$ is $s_{-i}^*$ where for each agent $j \in N \setminus i$, if $\theta_{j,0} = 0$, then $s_j^*(C \given \theta_j) = 1$, and $s_j^*(M \given \theta_j) = 1$ otherwise.
In a large system (i.e., $n \rightarrow \infty$), the safety and liveness guarantees of the Algorand protocol are met with high probability regardless of the strategy of agent $i$.
\end{proposition}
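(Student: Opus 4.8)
The plan is to reduce the safety and liveness of Algorand to a single combinatorial event — that at every step more than two-thirds of the selected committee members act \emph{honestly} (are non-Byzantine and cooperate) — and then to argue that a lone agent $i$ can shift this committee composition by at most one seat, a perturbation absorbed by the constant-fraction margin of \reflemm{lemm:chernoff}. Concretely, the committee-based Byzantine agreement reaches consensus, and the protocol terminates within $\kmax$ steps, whenever the honestly-acting committee members form a supermajority at each step, equivalently whenever fewer than one third of the selected members deviate. The remaining prerequisite, network synchrony, is immediate: the network model guarantees synchrony as long as a majority of nodes run the gossip protocol, and a single node cannot destroy that majority for $n$ large. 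Hence it suffices to control, for every step $k$ and every strategy $s_i$, the number $H_k$ of honestly-acting and the number $G_k$ of Byzantine-acting committee members, and to show $G_k < H_k/2$ with high probability.

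Next I would bound agent $i$'s influence. Under $s^*_{-i}$ every non-Byzantine $j \ne i$ cooperates and every Byzantine $j \ne i$ misbehaves, so among the other $n-1$ agents the honestly-acting committee members number exactly $\NR_k^{-i} := \sum_{j \ne i} X_{j,k}$ and the Byzantine-acting ones exactly $\NB_k^{-i} := \sum_{j \ne i} Y_{j,k}$, with $X, Y$ as in the proof of \reflemm{lemm:chernoff}. Since $i$ occupies at most one seat, $H_k \ge \NR_k^{-i}$ and $G_k \le \NB_k^{-i} + 1$ for any $s_i$. Indeed, as $n \to \infty$ the probability that $i$ is selected at even one of the $\kmax$ steps is at most $\kmax \, p_c = \kmax \, \tau / n \to 0$, so generically $i$ never joins the committee and the composition coincides with that induced by $s^*$; the $+1$ term matters only on a vanishing event. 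Splitting on $i$'s type, if $\theta_{i,0}=1$ then $i$ is already counted in $\NB_k$ and the bound collapses to the $s^*$ case $H_k = \NR_k,\, G_k \le \NB_k$; if $\theta_{i,0}=0$ the worst case is $i$ misbehaving while selected, giving $H_k \ge \NR_k - 1$ and $G_k \le \NB_k + 1$.

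In either case a short rearrangement shows that $G_k < H_k/2$ is implied by the slightly strengthened events $\NR_k > T\tau + 1$ and $\NB_k + \NR_k/2 < T\tau - 1$, which differ from the events of \reflemm{lemm:chernoff} only in replacing $T$ by $T \pm 1/\tau$. Feeding these adjusted thresholds into the Chernoff bounds changes the exponents by $O(1/\tau)$, so each strengthened event still fails with the overwhelming probability of \reflemm{lemm:chernoff}; a union bound over the $\kmax$ steps then yields an honest supermajority at every step with high probability, from which safety and liveness follow. The means $\EX(\NR_k^{-i}) = (1-1/n)\,\mu_r$ and $\EX(\NB_k^{-i}) = (1-1/n)\,\mu_b$ converge to $\mu_r,\mu_b$ as $n\to\infty$, so excluding one agent is likewise negligible.

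The step I expect to be the main obstacle is the reduction itself: carefully separating an agent's \emph{type} (which fixes whether it is counted in $\NR_k$ or $\NB_k$) from its \emph{action} (which decides whether it acts honestly), and checking that the one-seat perturbation is absorbed by the gap between $(1-p_b)\tau$ and $T\tau$ \emph{uniformly} over all strategies $s_i$, including the adversarial choice where a non-Byzantine $i$ misbehaves. Once the strengthened events are identified, the Chernoff estimates are routine.
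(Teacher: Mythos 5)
Your proposal is correct, and its core logic matches the paper's: both proofs rest on the observation that the margins in \reflemm{lemm:chernoff} between $(1-p_b)\,\tau$, $T\tau$, and $(1+p_b)\,\tau/2$ are constant fractions of $\tau$, so a single agent cannot close them; both note that synchrony survives one non-cooperating node; and both then delegate the actual safety/liveness conclusion to the known guarantee for Algorand's Byzantine agreement (Theorem 1 of \cite{CM2019}, which you restate as your honest-supermajority reduction rather than citing). The bookkeeping, however, is genuinely different, and the difference is instructive. The paper \emph{excises} agent $i$: it considers the $(n-1)$-agent system of the remaining agents, re-runs \reflemm{lemm:chernoff} there (the means only shift by a factor $1-1/n$), and applies the external theorem to that system --- implicitly treating agent $i$'s worst case as absence. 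You instead keep agent $i$ in the system and bound its influence as a one-seat \emph{perturbation} of the committee counts, $H_k \ge \NR_k - 1$ and $G_k \le \NB_k + 1$, then strengthen the two threshold events by an additive constant (a shift of order $1/\tau$ in the exponents) and take a union bound over the $\kmax$ steps. What your route buys is precision on exactly the point the paper glosses over: a misbehaving agent $i$ that wins a committee seat actively \emph{injects} a vote rather than merely withholding one, so the real execution is not literally an execution of the excised system; your explicit $\pm 1$ shift, combined with the case split on $\theta_{i,0}$ (which prevents the $+1$ on the bad side and the $+1$ on the honest side from being active simultaneously), covers this uniformly over all strategies $s_i$. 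What the paper's route buys is brevity and direct reuse of \reflemm{lemm:chernoff} and \cite{CM2019} as black boxes. Either argument is acceptable; yours is the more self-contained and rigorous of the two.
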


\begin{proof}
Assume that $p_b$, $\tau$, and $T$ are set such that $\NR_k > T \; \tau$ and $\NB_k + \NR_k / 2 < T \; \tau$ with overwhelming probability.
Consider a new system consisting of all agents except agent $i$.
For large systems, it is easy to modify \reflemm{lemm:chernoff} to show that the two inequalities still hold for the new system with the same $p_b$, $\tau$, and $T$.
Moreover, the Chernoff bound can be applied to show that with $p_b \le 1/3$, the network is synchronous with high probability regardless of agent $i$'s cooperation.
Given that the network is synchronous, and the two inequalities hold with high probability, Theorem 1 from \cite{CM2019} can be applied to the new system to guarantee safety and liveness.
\end{proof}

We next formulate the expected utility of each agent.
The Algorand protocol implements a randomized algorithm.
As discussed in \refsec{sec:algorand-prot}, the algorithm runs in multiple steps.
We use $K(a, \theta)$ to denote the total number of steps it takes the algorithm to complete as a function of agents' types and their actions.
Given $a$ and $\theta$, the total utility of agent $i$ with $\theta_{i,0} = 0$ can be formulated as:
\[
u_i(a, \theta) = \sum_{k=1}^{K(a, \theta)} u_{i}(a, \theta, k), 
\]
where $u_{i}(a, \theta, k)$ is $R_i(a, \theta, k) - C_i(a, \theta, k)$%
\footnote{In \refequ{equ:cost}, $C_i$ depends on $\theta$ through $.x_{i,k}$, and it is formulated assuming that agent $i$ cooperates.}
if $a_i = C$ and 0 otherwise.

\begin{lemma}
\label{lemm:expected-utility}
Consider any agent $i \in N$.
Suppose that the strategy profile of all agents except agent $i$ is $s_{-i}^*$ where for each agent $j \in N \setminus i$, if $\theta_{j,0} = 0$, then $s_j^*(C \given \theta_j) = 1$, and $s_j^*(M \given \theta_j) = 1$ otherwise.
Suppose further that agent $i$ adopts strategy $s_i$ which plays $M$ if $\theta_{i,0} = 1$ and plays $C$ with probability $s_c$ and $D$ with probability $1 - s_c$ if $\theta_{i,0} = 0$.
Let $a^*(\theta)$ be an action profile where $a^*_i$ is $C$ if $\theta_{i,0} = 0$ and $M$ otherwise.
Define $K(\theta) = K(a^*(\theta), \theta)$.
The expected utility of agent $i$ is:
\[
\EU_i((s_i, s^*_{-i})) = s_c \; (1-p_b) \; \sum_{\ell=1}^{\kmax} \EP(K(\theta) = \ell) \sum_{k=1}^{\ell} u(k),
\]
where $u(k) = R^{b}(k) \; p_c \; (1-p_b) - C^b(k) + (R^c(k) - C^c(k)) \; p_c$.
\end{lemma}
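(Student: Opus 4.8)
The plan is to compute the ex ante expected utility directly from its definition, conditioning on whether agent $i$ is Byzantine and then unwinding the step-by-step structure of the payoff. Since $\theta_{i,0}=1$ with probability $p_b$ and in that case agent $i$ plays $M$ and earns utility $0$, that branch contributes nothing. So $\EU_i((s_i, s^*_{-i})) = (1-p_b)\,\EX_{a,\theta}[u_i(a,\theta)\given \theta_{i,0}=0]$. Conditioned on $\theta_{i,0}=0$, agent $i$ plays $C$ with probability $s_c$ and $D$ with probability $1-s_c$; when it plays $D$ its utility is $0$ (it runs no code, incurs no cost, and under \name\ earns no reward). Hence a further factor of $s_c$ factors out, leaving $\EU_i((s_i,s^*_{-i})) = s_c\,(1-p_b)\,\EX_\theta[u_i(a^*(\theta),\theta)\given \theta_{i,0}=0]$, where $a^*(\theta)$ is the fully-cooperative-among-honest profile.

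Next I would substitute the definition $u_i(a^*(\theta),\theta) = \sum_{k=1}^{K(\theta)} u_i(a^*(\theta),\theta,k)$ and take the expectation over $\theta$. The subtlety here is that the number of steps $K(\theta)$ is itself random, so I would condition on the event $\{K(\theta)=\ell\}$, giving the outer sum $\sum_{\ell=1}^{\kmax}\EP(K(\theta)=\ell)\sum_{k=1}^{\ell}\EX_\theta[u_i(a^*(\theta),\theta,k)\given \theta_{i,0}=0, K(\theta)=\ell]$. This is exactly where \refprop{thm:expected_steps} does the heavy lifting: because agent $i$ alone cannot change the outcome of any step with high probability (in the large-system limit), the realization of $K(\theta)$ is independent of agent $i$'s own sortition values $\theta_{i,k}$, and the per-step expected payoff of agent $i$ does not depend on the conditioning event $\{K(\theta)=\ell\}$. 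This decoupling is what lets me replace the conditional per-step expectation by an unconditional constant $u(k)$ and pull $\EP(K(\theta)=\ell)$ cleanly out front.

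It then remains to evaluate the per-step expectation $u(k) = \EX_\theta[u_i(a^*(\theta),\theta,k)\given \theta_{i,0}=0]$ for a cooperating honest agent, using $u_i(a,\theta,k)=R_i(a,\theta,k)-C_i(a,\theta,k)$. From \refequ{equ:our_reward}, the committee-reward term contributes $R^c(k)\,\EX[\II(\theta_{i,k}\le p_c)] = R^c(k)\,p_c$, and the baseline-reward term contributes $(R^b(k)/n_{rs})\sum_{j\in N_i}\EP(\theta_{j,k}\le p_c,\,a_j=C)$; since each of the $n_{rs}$ relevant peers is honest-and-selected with probability $(1-p_b)\,p_c$, this sums to $R^b(k)\,p_c\,(1-p_b)$. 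For the costs, from \refequ{equ:cost} the baseline cost $C^b(k)$ is incurred unconditionally while the committee cost $C^c(k)$ is incurred only when $\theta_{i,k}\le p_c$, contributing $-C^b(k) - C^c(k)\,p_c$ in expectation. Collecting these four terms yields $u(k) = R^b(k)\,p_c\,(1-p_b) - C^b(k) + (R^c(k)-C^c(k))\,p_c$, which matches the claim.

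The main obstacle I anticipate is rigorously justifying the independence used in the second paragraph: a priori, agent $i$'s sortition outcomes $\theta_{i,k}$ both enter its own per-step payoff and could in principle influence when termination occurs, so the naive factorization $\EX[\sum_{k\le K}u_{i,k}] = \sum_\ell \EP(K=\ell)\sum_{k\le\ell}\EX[u_{i,k}]$ is not automatic. The resolution is precisely \refprop{thm:expected_steps}: in the $n\to\infty$ limit agent $i$'s presence or absence from any committee is immaterial to the protocol's progress with overwhelming probability, so $K(\theta)$ may be treated as independent of agent $i$'s type coordinates, and the exchange of summation and expectation goes through. I would flag that the stated equality therefore holds in the large-system asymptotic regime (up to the high-probability error terms controlled by \reflemm{lemm:chernoff}), consistent with the hypotheses already invoked.
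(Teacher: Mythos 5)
Your overall skeleton matches the paper's proof: factor out the Byzantine and defect branches to get the prefactor $s_c\,(1-p_b)$, partition on the event $\{K(\theta)=\ell\}$, and evaluate a per-step expectation; your arithmetic for the four terms of $u(k)$ is also exactly the paper's. The gap is in the step you yourself flag as the main obstacle: you resolve the decoupling of $K(\theta)$ from the relevant type coordinates by citing \refprop{thm:expected_steps}, but that proposition cannot do this job. It asserts only that safety and liveness hold with high probability regardless of agent $i$'s strategy; it says nothing about the \emph{distribution} of the number of steps, which can perfectly well remain correlated with individual type coordinates while safety and liveness always hold (for finite $n$ the correlation is genuinely nonzero: conditioning on $\theta_{i,k}\le p_c$ raises the chance that $i$ is the highest-priority member at a pivotal step, which affects how fast the protocol terminates). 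What the paper actually uses is a structural fact, Theorem 1 of \cite{CM2019}: $K(\theta)$ depends on the type profile only through whether the highest-priority committee member at steps $k\in\{1,7,10,\dots,\kmax\}$ is corrupted. It then introduces the event $E_j$ that agent $j$ is never such a highest-priority member, proves $\EP(E_j=0)\le \kmax/3n\rightarrow 0$ by a union bound, and concludes the asymptotic identity $\EP(K(\theta)=\ell \given \theta_{j,k}\le p_c,\,\theta_{j,0}=0)\simeq\EP(K(\theta)=\ell)$. Without some argument of this kind identifying \emph{how} $K$ depends on individual types, your independence claim is an assertion, not a proof.

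Second, you state the decoupling only for agent $i$'s own sortition values, but that is not enough. The baseline-reward term in \refequ{equ:our_reward} involves the peers' coordinates $(\theta_{j,k},\theta_{j,0})$ for $j\in N_i$: after applying Bayes' rule you need $\EP(K(\theta)=\ell \given \theta_{j,k}\le p_c,\,\theta_{j,0}=0)\simeq \EP(K(\theta)=\ell)$ for every peer $j$, and this case is more delicate than the one for $i$, because under $s^*_{-i}$ a peer's type determines whether it \emph{misbehaves} --- a Byzantine highest-priority proposer genuinely changes the step count, whereas agent $i$ only ever chooses between $C$ and $D$. The paper's $E_j$ argument is applied precisely to these peers first, and only then, ``by the same argument,'' to agent $i$ (for both the committee reward and the committee cost term $C^c(k)\,p_c$). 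So your decomposition and final formula are correct, but the key interchange must be justified by the highest-priority/union-bound argument applied to $i$ and to each $j\in N_i$, not by \refprop{thm:expected_steps}.
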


The proof is deferred to \refapp{app:utility-proof}.
Next, we prove that $s^*$ is a BNE under certain conditions.
These conditions are derived simply by requiring that the expected rewards should outweigh the expected costs. 
\begin{theorem}
\label{thm:BNEByzantine}
Let $s^*$ be a strategy profile where for each agent $i$, if $\theta_{i,0} = 0$, then $s_i^*(C \given \theta_i) = 1$, and otherwise, $s_i^*(M \given \theta_i) = 1$.
Given $p_c$, $p_b$, $C^b(k)$, and $C^c(k)$ for  $k = 1, \dots, \kmax$, $s^*$ is a BNE under \name if:
\[
R^{b}(k) (1-p_b) + R^c(k) \ge C^b(k) / p_c + C^c(k),  ~~ \forall k = 1, \dots, \kmax.
\]
\end{theorem}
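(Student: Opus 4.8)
The plan is to check the best-response condition $s_i^* \in \BR_i(s^*_{-i})$ separately for corrupted and non-corrupted types of an arbitrary agent $i$, and then appeal to symmetry across agents to conclude that $s^*$ is a BNE. For a corrupted agent ($\theta_{i,0} = 1$) the utility model assigns $0$ to $M$ and $-B$ to both $C$ and $D$, so $M$ strictly dominates and $s_i^*(M \given \theta_i) = 1$ is immediately optimal. For a non-corrupted agent ($\theta_{i,0} = 0$) playing $M$ returns $-B$ and is dominated, so the analysis reduces to comparing cooperation with defection. I would handle first the randomizations that do not condition on the sortition outcomes, i.e., the one-parameter family in \reflemm{lemm:expected-utility}, deferring general type-dependent deviations to the obstacle discussed below.

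With that reduction in hand, I would plug into the lemma to obtain
\[
\EU_i((s_i, s^*_{-i})) = s_c \; (1-p_b) \sum_{\ell=1}^{\kmax} \EP(K(\theta) = \ell) \sum_{k=1}^{\ell} u(k),
\]
a function that is linear in $s_c \in [0,1]$. The key calculation is to show that its slope is nonnegative exactly when the hypothesis holds. Multiplying the stated inequality by $p_c > 0$ shows that it is equivalent to $u(k) = R^{b}(k)\,p_c\,(1-p_b) - C^b(k) + (R^c(k) - C^c(k))\,p_c \ge 0$ for every $k$. Consequently each partial sum $\sum_{k=1}^{\ell} u(k)$ is nonnegative, and the coefficient of $s_c$ — a nonnegatively weighted average of these partial sums, with weights $\EP(K(\theta)=\ell)$ — is nonnegative as well. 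A linear function with nonnegative slope on $[0,1]$ is maximized at $s_c = 1$, which is precisely $s_i^*$, so cooperating is a best response within this family. Since the argument does not single out any particular $i$, it yields $s_i^* \in \BR_i(s^*_{-i})$ for all agents simultaneously, giving the BNE claim.

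The step I expect to be the main obstacle is justifying the reduction to the one-parameter family of \reflemm{lemm:expected-utility}, namely that no strategy conditioning on the sortition outcomes $\theta_{i,1},\dots,\theta_{i,\kmax}$ can strictly improve on cooperating. Because an agent learns its full type before committing to a single protocol-wide action, a type-dependent deviation could defect precisely on the realizations where the conditional value of cooperating is negative; the hypothesis only controls the type-averaged per-step quantity $u(k)$, not the payoff conditioned on a specific committee-membership pattern (note in particular that for large $n$ a typical agent is never selected, so its value reduces to the baseline terms alone). To close this gap I would use \refprop{thm:expected_steps} to argue that in the large-system limit a lone agent perturbs neither the number of steps $K$ nor its neighbors' contributions, so that the relevant expectation factorizes and the incentive is governed by $u(k)$ at the granularity the lemma assumes; making this factorization rigorous — or, failing that, isolating the stronger per-step condition that guarantees a nonnegative continuation value for every type — is where the real work lies.
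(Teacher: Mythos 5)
Your core argument is exactly the paper's proof: invoke \reflemm{lemm:expected-utility}, note that the hypothesis multiplied by $p_c$ is equivalent to $u(k) \ge 0$ for every $k$, and conclude that the expected utility, being linear in $s_c$ with nonnegative slope, is maximized at $s_c = 1$; your explicit treatment of the Byzantine type ($M$ dominates via the $-B$ payoffs) is left implicit in the paper but is trivial.

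The obstacle you flag, however, is genuine, and you should know the paper does not resolve it either: its proof silently maximizes only over the one-parameter family of \reflemm{lemm:expected-utility}, even though the BNE definition in \refsec{sec:strategies-equilibria} quantifies over all $s_i \in S_i$, including strategies that condition on the sortition outcomes $\theta_{i,1},\dots,\theta_{i,\kmax}$, which agents observe before acting. Your worry is not hypothetical. Take $R^b(k)=0$, $C^b(k)>0$, $C^c(k)=0$, and $R^c(k) = C^b(k)/p_c$: the theorem's hypothesis holds ($u(k)=0$), yet an agent whose type satisfies $\theta_{i,k} > p_c$ for all $k$ (never selected, an event of probability close to one when $p_c = \tau/n$ is small) earns a strictly negative continuation value of $-\sum_k C^b(k)$ from cooperating versus $0$ from defecting; hence the strategy ``cooperate iff selected at least once'' is a strictly profitable deviation, and $s^*$ fails to be a best response under the literal definition. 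Also, your proposed repair via \refprop{thm:expected_steps} would not close this gap: the asymptotic factorization (made precise in \refapp{app:utility-proof}) only says that $K(\theta)$ is insensitive to a single agent's type and action, and does nothing to restore the sign of the per-type continuation value. What does close it is the fallback you mention at the end: the stronger per-term conditions $R^b(k)\,p_c\,(1-p_b) \ge C^b(k)$ and $R^c(k) \ge C^c(k)$, under which cooperation is optimal conditional on \emph{every} realization of $\theta_i$ (the baseline net is nonnegative whether or not the agent is selected, and the committee net is nonnegative when it is). Notably, this strengthened pair is exactly what the paper itself imposes when deriving rewards in \refsec{sec:exp}, so the cleanest resolution is to state the theorem under those conditions.
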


\begin{proof}
To prove that $s^*$ is a BNE, it suffices to show that $s_i^*$ is a best response to $s_{-i}^*$ for all agent $i \in N$.
Recall that $s_i^*$ is a best response to $s_{-i}^*$ when $\EU_i((s_i^*, s^*_{-i})) \ge \EU_i((s_i, s^*_{-i}))$ for all $s_i \in S_i$.
If $R^{b}(k) \; (1-p_b) + R^c(k) \ge C^b(k) / p_c + C^c(k)$ for all $k = 1, \dots, \kmax$, then $u(k)$ in \reflemm{lemm:expected-utility} is greater than or equal to zero for all $k = 1, \dots, \kmax$.
Consequently, $\EU_i((s_i, s^*_{-i}))$ is maximized when $s_c = 1$.
In this case, we have $s = s^*$ which means $s^*$ is a BNE.
\end{proof}
\section{Implementation Details}
\label{sec:impl}

\subsection{Gossip Protocol in \name}
To implement \name, we make three main modifications to Algorand's default gossip protocol.
First, we require the randomness of the peer-selection mechanism to be verifiable (e.g., through verifiable pseudo-random peer selection \cite{LCW+2006}).
This requirement prevents the adversary from gaining unauthorized awards.
Byzantine nodes cannot be rewarded for propagating messages to nodes that are not among their randomly selected peers.
Byzantine committee members also cannot refer other Byzantine nodes that are not randomly connected to them.

Second, we require nodes to disable selective propagation.
Selective propagation is an optimization technique that  prevents nodes from propagating low-priority block proposals \cite{GHM+2017, CM2019}.
Although this technique reduces network congestion, it prevents low-priority block proposals from reaching all nodes in the gossip network.
This in turn prevents the Algorand Foundation from tracking cooperation of some committee members at step 1.
One optimization that could be implemented to replace selective propagation is to only send the committee member's credential for the low-priority block proposals without sending the entire block proposal.

Third, we require nodes to track the identity of all nodes from which they have received a valid message, even if the message is a duplicate message.
For example, suppose that agent $i$ receives message $m$ at step $k$ first from agent $j$ and later from agent $j^\prime$.
Agent $i$ propagates message $m$ to its peers only the first time it receives it from agent $j$.
However, it saves the identity of both agents $j$ and $j^\prime$ as propagators for $m$ at step $k$.
If agent $i$ becomes a committee member at step $k + 1$, it includes the identity of both agents $j$ and $j^\prime$ in its generated vote.
This allows the Algorand Foundation to not only track the cooperation of the committee members but also the cooperation of their peers.
Since committee members are selected randomly, their cooperating peers can be considered random samples of non-voting agents that cooperate.

\subsection{Consideration of Assets in \name}
\label{sec:imp-assets}
Algorand assigns sub-nodes to each node in proportion to the balance of its account.
For simplicity, in our analysis so far, we have considered a single sub-node per node.
We now discuss the necessary changes needed to allow multiple sub-nodes per node.

\noindent
\textbf{Cryptographic sortition.}
The sortition algorithm can be easily extended to consider node with more than one sub-node.
Suppose that node $i$ has $w_i$ sub-nodes.
A simple way to extend sortition is for node $i$ to run the sortition algorithm (see \refalg{alg:algorand}) on each of its $w_i$ sub-nodes.
Although simple, this method is computationally expensive.
An alternative method, proposed in \cite{CM2019}, is to use the inverse transform sampling.
In this method, the interval of $[0, 1)$ is partitioned into consecutive intervals of the following form.
$$I^{M}_{w_i, p} = \big[B(M; w_i, p), B(M+1; w_i, p)\big), ~~~~~~~~~ \forall M \in \{0, 1,\dots,w_i - 1\}.$$
If $.x_i$ falls in the $I^{M}_{w_i,p}$ interval, node $i$ has $M$ selected sub-nodes.

\noindent
\textbf{Adversary model.}
So far, we have assumed that the adversary corrupts each node with probability $p_b < 1/3$.
We can extend this assumption by allowing the adversary to corrupt each sub-node with probability $p_b$.
Suppose that the total number of sub-nodes is $W = \sum_{i\in N} w_i$.
We define $W_{-i} = \sum_{j \in N \setminus i} w_j$ for all agents $i$.
For our results to hold under the new adversary model, we require $W_{-i}$ to go to infinity as $n$ goes to infinity.
We further require that $w_{i} / W < (1 - 3 \times p_b)$ for all nodes $i$.
This requirement ensures that if any single node is removed from the set of nodes, in expectation, the adversary does not control more than one-third of the remaining sub-nodes.
For example, if $p_b = 0.3$, then no single agent should have greater than or equal to one-tenth of all sub-nodes.
Under these assumptions, it can be easily verified that our results in \refsec{sec:analysis} hold for the new adversary model because non-Byzantine sub-nodes still constitute more than $2/3$ of the committee with high probability.

\noindent
\textbf{Gossip network and protocol.}
When each agent has more than one sub-node, instead of forming the random gossip network among nodes, we require the gossip network to be constructed among sub-nodes.
This means that nodes need to select a subset of random sub-nodes as peers for each of their sub-nodes.
This can be achieved by including indexes of sub-nodes as an extra input for the verifiable pseudo-random peer selection procedure \cite{LCW+2006}.
When a sub-node propagates a message to another sub-node, it is required to include the indexes of both the sender and the receiver sub-nodes in the message.
When a node receives a message, it will verify if the sender sub-node and the receiver sub-node are indeed connected.

\noindent
\textbf{\name rewards.}
The simplest way to extend \name to consider multiple sub-nodes per node is to distribute rewards on a per-sub-node basis. 
In other words, IRS rewards each sub-node independently.
In this way, the expected total reward of a node is the sum of the expected rewards of its sub-nodes.
The expected total cost depends on the implementation of the sortition process and the gossip protocol. 
With the simple extensions presented above, the total expected cost is the sum of the expected costs of its sub-nodes.

\section{Experiments}
\label{sec:exp}

In this section, we estimate the costs of running an Algorand node for two real-world deployment scenarios.
Based on the costs, we identify the baseline reward, $R^b$, and committee reward, $R^c$, for the two scenarios to guarantee participation as a BNE strategy.
We also simulate the extended gossip protocol outlined in \refsec{sec:impl} for \name and measure its overheads in terms of computation, storage and networking.

\subsection{Estimated Costs and Rewards}
Compared to PoW blockchains, Algorand nodes require much less computation power to participate in the consensus protocol.
Any tentative participant can run an Algorand node with commodity machines.
We consider two deployment scenarios.
First, the participant runs an Algorand node on a virtual machine from a cloud service provider such as Amazon Web Services (AWS) \cite{AWS}.
Second, the participant runs the Algorand protocol on a personal computer.
We estimate the costs in Algo%
\footnote{Algo is the unit of Algorand's cryptocurrency.} 
($1$ Algo $= 0.34$ USD).
\begin{table}[t!]
\centering
    \begin{tabular}{|c | c | c|} 
        \hline
        Comp. \& Storage Cost & $72.54$ \$/mon & $8.2 \times 10^{-5}$ Algo/s \\ 
        \hline
        Networking Cost & $0.09$ \$/GB & $0.2647$ Algo/GB \\
        \hline
    \end{tabular}
\vspace{2ex}
\caption{Costs of an AWS EC2 instance}
\label{table:AWSCosts}
\vspace{-4mm}
\end{table}

\noindent
\textbf{Cloud-hosted Algorand node.}
Consider an AWS EC2 instance with 4 vCPU, 8GB memory, and 256GB SSD storage.
The costs associated with renting such instance are listed in \reftab{table:AWSCosts}.
The average size of each block in Algorand ledger is about $40$KB \cite{Algoscan}.
We estimate the size of a vote message to be about $1$KB based on the official implementation of Algorand in Go programming language \cite{GoAlgorand}.
On average, it takes about $5$ seconds for the Algorand protocol to commit a block \cite{AlgoMetric}.
We assume that each round takes $5$ steps, which means that each step takes about $1$ second.
We further assume that each Algorand node has $8$ peers in the gossip network.
This is the default value in Algorand \cite{GHM+2017}.
According to current specification of Algorand \cite{MainnetSpec}, $p_b = 1/5$ and $\tau = 20, 2990, 1500, 5000, 5000, \dots, 5000$ for step 1 to $\kmax$.
As of July 24th, 2022, according to the AlgoExplorer \cite{AlgoExplorer}, there are $550$ nodes with a total of more than $25$ billion sub-nodes actively participating in running the Algorand protocol.

We estimate the baseline cost at each step as computation and storage costs plus communication costs.
Computation and storage costs are listed in \reftab{table:AWSCosts} and are fixed across steps.
The communication cost per step is calculated as a linear function of the number of neighbours per node and the per GB networking cost listed in \reftab{table:AWSCosts}.
Similarly, the committee cost at each step is estimated as the sum of computation and storage costs and communication costs.
The computation and storage costs are covered in our estimated baseline cost (the same rented AWS EC2 instance executes both baseline and committee related potions of the Algorand protocol).
The communication cost per step is calculated similarly as before.
Given the costs, we derive the baseline rewards and committee rewards at each step by ensuring that $R^{b}(k) (1-p_b) \ge C^b(k) / p_c$ and $R^c(k) \ge C^c(k)$ according to \refthm{thm:BNEByzantine}.
The estimated costs and derived \name rewards per block are listed in \reftab{table:AWSEstimation}.

\begin{table}[t!]
\centering
    \begin{tabular}{|c | c | c | c | c|} 
        \hline
        Step & $C^b$ & $R^b$ & $C^c$ & $R^c$ \\ 
        \hline
        1 & $1.77 \times 10^{-3}$ & 18603 & $8.47 \times 10^{-5}$ & $8.5 \times 10^{-5}$ \\
        \hline
        2 & $6.4 \times 10^{-3}$ & 133891 & $2.12 \times 10^{-6}$ & $2.2 \times 10^{-6}$ \\
        \hline
        3 & $3.26 \times 10^{-3}$ & 20408 & $2.12 \times 10^{-6}$ & $2.2 \times 10^{-6}$ \\
        \hline
        4 to $\kmax$ & 0.011 & 66823 & $2.12 \times 10^{-6}$ & $2.2 \times 10^{-6}$ \\
        \hline
    \end{tabular}
\vspace{2ex}
\caption{Estimated costs and rewards (Algo/block) on AWS.}
\label{table:AWSEstimation}
\vspace{-4mm}
\end{table}

\noindent
\textbf{Self-hosted Algorand node.}
Next, we consider the costs and associated \name rewards when an Algorand node runs on a personal computer (PC).
We consider the energy consumption of a typical PC to be about $200$W \cite{computerEnergy}.
We consider an average energy price of $0.0944$ USD/kWh \cite{OhioEnergyPrice}.
We further consider an average price of an Internet plan with unlimited traffic to be around $100$ USD/month.
Given these costs, the overall cost of running an Algorand node on a personal computer is about $0.00013$ Algo/s.
Similar to our cost estimation for the AWS scenario, the baseline cost covers the total computation and storage costs.
Since there is no limit on the traffic, we can simply set $C^b = 0.00013$ Algo and $C^c = 0$ Algo.
The baseline \name rewards are listed in \reftab{table:PCEstimation}.

\begin{table}[t!]
\centering
    \begin{tabular}{|c | c | c | c | c|} 
        \hline
         & Step 1 & Step 2 & Step 3 & Steps 4 to $\kmax$ \\ 
        \hline
        $R^b$ & 1362 & 2627 & 815 & 815\\
        \hline
    \end{tabular}
\vspace{2ex}
\caption{Estimated baseline rewards (Algo/block) on PC.}
\label{table:PCEstimation}
\vspace{-4mm}
\end{table}

\noindent
\textbf{Comparison against Algorand's original reward scheme.}
Algorand's Participation Rewards only distributes about $20$ to $30$ Algos for each block.
In comparison, our calculated rewards might seem prohibitively large for a practical implementation and deployment of \name.
We note that most blocks are generated within the first 5 steps. 
This means that for most blocks, no reward is given for steps 6 to $\kmax$. 
Second, our calculations are based on the assumption that each agent has a single sub-node, which leads to a worst-case analysis (unlike expected rewards, expected costs do not necessarily increase linearly as a function of sub-nodes).
In the real-world deployment of Algorand, each agent has on average 40 million sub-nodes. 
Restricting the analysis to such agents could lead to more feasible rewards.
Finally, in real-world settings, some agents are altruistic (i.e., they participate regardless of the rewards). 
Inclusion of such agents in the analysis would lead to lower rewards.
Extending IRS to include altruistic agents would be an interesting future work, both theoretically and in practice. 

\subsection{Overhead of Gossip Protocol in \name}
The overhead of the extended gossip protocol in \name consists of three main parts: 
(i) computational overhead of running verifiable random peer selection, 
(ii) networking overhead of disabling selective propagation, and 
(iii) storage overhead of storing identities of all valid message propagators.

\noindent
\textbf{Computation overhead.}
The verifiable random peer selection procedure randomly connects sub-nodes to each other.
To implement this procedure, the inverse transform sampling technique in Algorand's sortition procedure can be used (see \refsec{sec:imp-assets}).
To measure the overhead of this implementation, we run the benchmark of the sortition function in the official implementation of Algorand \cite{GoAlgorand} on a machine powered by AMD EPYC 7H12 processors.
The average execution time to run the sortition once is about $0.2$ ms.
Algorand reconstructs the gossip network for each new block \cite{GHM+2017}, so the overall computation overhead is then $0.0002 \times (N - 1) \approx 0.11$ seconds per block for $N = 550$.
Given that each block is committed to the ledger every 5 seconds, this is an overhead of 2.2\%.
One way to reduce this overhead is to reduce the network reconstruction frequency (e.g., reconstruct the gossip network every 4 or 5 blocks).

\noindent
\textbf{Network and storage overhead.}
To measure the network and the storage overheads, we implement a simulator of the gossip protocol in \name.
The simulator has 132 lines of Python code and is available at \url{https://anonymous.4open.science/r/IRS-Simulator/}.
We run our simulator on real-world data extracted from AlgoExplorer \cite{AlgoExplorer}.
We randomly connect each sub-node to 8 other sub-nodes.
According to our simulation results, on average, each Algorand node is connected to $252$ peers (each node has on average more than 45 million sub-nodes).
Moreover, our results show that each node has to gossip the hashes of low-priority blocks $18$ times (on average 20 selected sub-nodes as block proposers).
Assuming that the size of each block hash is $1$KB, the network bandwidth overhead for each node is about $252 \times 18 \times 1$KB $= 4536$KB per committed block.
For storage overhead, each node needs to store the public keys of all valid message propagators.
The size of the public key in Algorand is $32$ bytes \cite{GoAlgorand}, so the storage overhead is about $252 \times 32$B $= 8.064$KB per committed block.
Regarding networking and storage, \name has limited carbon footprint because networking and storage are energy disproportional (i.e., the energy consumption does not proportionally change as a function of network/storage utilization). 
\section{Related Works}

\noindent
\textbf{Incentives in blockchains.}
Incentive compatibility of reward schemes is crucial to the security of blockchains.
Eyal and Sirer \cite{ES2014} prove negative results with regard to incentive compatibility of Bitcoin.
They model the competitive mining among the miners as a strategic game and propose a novel mining strategy called selfish mining.
The authors examine the profitable threshold of selfish mining, which is the computation power needed to gain more revenue.
The authors then propose a modification to the Bitcoin chain-selection protocol to increase the profitable threshold of selfish mining to 1/4.
Selfish mining has been studied extensively ever since \cite{SSZ2016,NKM+2016,ZZK2020,HZJ+2021,PS2017}.

\noindent
\textbf{Incentives for information propagation.}
Most permissionless blockchains \cite{Nakamoto2008,wood2014ethereum,GHM+2017} rely on a peer-to-peer gossip protocol to propagate transactions.
However, these protocols do not provide an incentive for the nodes in the blockchains to participate in propagating the transactions.
In fact, Babaioff et al. \cite{BDO+2012} argue that these protocols incentivize nodes not to propagate transactions.

The authors in \cite{BDO+2012} propose a reward scheme to incentivize transaction propagation.
They prove that under the proposed reward scheme, the strategy where all nodes propagate transactions and do not create fake identities (Sybil attacks) is a Nash equilibrium.
In addition, the proposed scheme guarantees that most of the nodes in the network will be aware of the transaction.
The additional rewards required to implement the scheme are a constant in expectation and the user only needs to send the transaction to a small number of nodes in the beginning.
A main drawback of the model in \cite{BDO+2012} is that it is highly restricted.
The model only considers networks in the form of a forest of $d$-ary directed trees with height $H$ and assumes that each node has the same hashing power.

Ersoy et al. \cite{ERE+2018} propose another incentive mechanism for transaction propagation under a network model with minimal restrictions.
The resulting reward scheme encourages propagating without creating fake identities.

\noindent
\textbf{Incentives in committee-based protocols.}
Amoussou-Guenou et al. \cite{ABP+2020,ABP+2020a} analyze a simplified committee-based consensus protocol.
They propose to make some committee members pivotal such that they have incentives to participate.
In this case, a rational node can unilaterally determine the result of the consensus protocol.
If a pivotal node does not follow the protocol, consensus is not be reached and a penalty is applied.
As a result, all pivotal nodes participating in the protocol as required by the protocol becomes a Nash equilibrium.
Nevertheless, the solution is not practical in Algorand because the assumptions in \cite{ABP+2020,ABP+2020a} do not hold.
For example, the solution proposed by \cite{ABP+2020a} assumes that all nodes are treated equally and have the same voting power, which is not the case in Algorand.
Also, the solution assumes that all the committee members are ordered by publicly known indexes, while the membership is private in Algorand until nodes publish it.
In addition, each node communicates with all other nodes directly in the simplified protocol while Algorand adopts a gossip protocol to disseminate messages.

Fooladgar et al. \cite{FMJ+2020} analyze the Algorand protocol as a static non-cooperative game.
They demonstrate that all nodes participating the protocol is not a Nash equilibrium under Algorand's original reward scheme.
If all other nodes participate in the protocol, a node can free-ride to get its reward without paying the costs of running an Algorand node.
To address the free-rider problem, the authors propose a new reward scheme where only participating nodes are rewarded.
In this work, the analysis is limited to the rational-agent-only case.
In addition, the proposed reward scheme only incentivizes committee members and
fails to incentivize message propagation by all nodes.
Under the proposed method, only nodes whose participation is necessary (for the network to remain synchronous) are incentivized to propagate messages. 

We extend the prior work in two main aspects.
First, our model preserves important features of the Algorand protocol, such as randomized membership selection and private membership information.
Second, we propose a new reward scheme and prove that all nodes participating faithfully is a Bayesian Nash equilibrium under certain conditions even in the presence of an adversary.

\section{Conclusion \& Future Works}

In this paper, we model the Algorand protocol as a Bayesian game.
We propose a reward-distribution scheme and derive necessary conditions such that all nodes participating in the protocol is a Bayesian Nash equilibrium.
Our work highlights some open problems in designing incentive-compatible reward schemes for randomized, committee-based consensus protocols that depend on nodes for message propagation.
In particular, like Algorand's original reward scheme, our proposed reward scheme relies on a central authority to collect messages to identify the role of nodes and to verify the participation of nodes.
A more decentralized reward scheme is preferred where nodes in Algorand network determine how the rewards are distributed without a central authority.
Moreover, although our model captures some important features of Algorand, modeling the protocol as a sequential game is an interesting direction for future work.
Finally, our analysis is based on the assumption that the network is synchronous.
Extending the analysis to a setting where the network is not synchronous would be a possible future direction.

\begin{acks}
This work was partially supported by the NSERC-RGPIN-2019-04936, CFI-JELF-38850, ORF-RI-38850, Ripple, and NSERC Discovery grants.
\end{acks}

\bibliographystyle{ACM-Reference-Format} 
\bibliography{bib}


\begin{thebibliography}{56}


\ifx \showCODEN    \undefined \def \showCODEN     #1{\unskip}     \fi
\ifx \showDOI      \undefined \def \showDOI       #1{#1}\fi
\ifx \showISBNx    \undefined \def \showISBNx     #1{\unskip}     \fi
\ifx \showISBNxiii \undefined \def \showISBNxiii  #1{\unskip}     \fi
\ifx \showISSN     \undefined \def \showISSN      #1{\unskip}     \fi
\ifx \showLCCN     \undefined \def \showLCCN      #1{\unskip}     \fi
\ifx \shownote     \undefined \def \shownote      #1{#1}          \fi
\ifx \showarticletitle \undefined \def \showarticletitle #1{#1}   \fi
\ifx \showURL      \undefined \def \showURL       {\relax}        \fi
\providecommand\bibfield[2]{#2}
\providecommand\bibinfo[2]{#2}
\providecommand\natexlab[1]{#1}
\providecommand\showeprint[2][]{arXiv:#2}

\bibitem[\protect\citeauthoryear{??}{com}{2018}]%
        {computerEnergy}
 \bibinfo{year}{2018}\natexlab{}.
\newblock \bibinfo{title}{Power Management Statistics}.
\newblock
  \bibinfo{howpublished}{\url{https://www.it.northwestern.edu/hardware/eco/stats.html}}.
\newblock


\bibitem[\protect\citeauthoryear{??}{Ohi}{2021}]%
        {OhioEnergyPrice}
 \bibinfo{year}{2021}\natexlab{}.
\newblock \bibinfo{title}{State Electricity Profiles}.
\newblock \bibinfo{howpublished}{\url{https://www.eia.gov/electricity/state/}}.
\newblock


\bibitem[\protect\citeauthoryear{Algorand}{Algorand}{2022}]%
        {AlgoMetric}
\bibfield{author}{\bibinfo{person}{Algorand}.} \bibinfo{year}{2022}\natexlab{}.
\newblock \bibinfo{title}{{A}lgorand Developer Portal}.
\newblock \bibinfo{howpublished}{\url{https://metrics.algorand.org/}}.
\newblock


\bibitem[\protect\citeauthoryear{Algoscan}{Algoscan}{2022}]%
        {Algoscan}
\bibfield{author}{\bibinfo{person}{Algoscan}.} \bibinfo{year}{2022}\natexlab{}.
\newblock \bibinfo{title}{Developer | Algoscan}.
\newblock \bibinfo{howpublished}{\url{https://developer.algoscan.app/}}.
\newblock


\bibitem[\protect\citeauthoryear{Amazon}{Amazon}{2006}]%
        {AWS}
\bibfield{author}{\bibinfo{person}{Amazon}.} \bibinfo{year}{2006}\natexlab{}.
\newblock \bibinfo{title}{Cloud Computing Services - Amazon Web Services
  (AWS)}.
\newblock \bibinfo{howpublished}{\url{https://aws.amazon.com/}}.
\newblock


\bibitem[\protect\citeauthoryear{Amoussou-Guenou, Biais, Potop-Butucaru, and
  Tucci-Piergiovanni}{Amoussou-Guenou et~al\mbox{.}}{2020a}]%
        {ABP+2020}
\bibfield{author}{\bibinfo{person}{Yackolley Amoussou-Guenou},
  \bibinfo{person}{Bruno Biais}, \bibinfo{person}{Maria Potop-Butucaru}, {and}
  \bibinfo{person}{Sara Tucci-Piergiovanni}.} \bibinfo{year}{2020}\natexlab{a}.
\newblock \showarticletitle{Rational Behaviors in Committee-Based Blockchains}.
  In \bibinfo{booktitle}{\emph{Proceedings of the 24th International Conference
  on Principles of Distributed Systems ({OPODIS})}}.
  \bibinfo{pages}{12:1--12:16}.
\newblock


\bibitem[\protect\citeauthoryear{Amoussou-Guenou, Biais, Potop-Butucaru, and
  Tucci-Piergiovanni}{Amoussou-Guenou et~al\mbox{.}}{2020b}]%
        {ABP+2020a}
\bibfield{author}{\bibinfo{person}{Yackolley Amoussou-Guenou},
  \bibinfo{person}{Bruno Biais}, \bibinfo{person}{Maria Potop-Butucaru}, {and}
  \bibinfo{person}{Sara Tucci-Piergiovanni}.} \bibinfo{year}{2020}\natexlab{b}.
\newblock \showarticletitle{Rational vs Byzantine Players in Consensus-Based
  Blockchains}. In \bibinfo{booktitle}{\emph{Proceedings of the 19th
  International Conference on Autonomous Agents and MultiAgent Systems
  ({AAMAS})}}. \bibinfo{pages}{43--51}.
\newblock


\bibitem[\protect\citeauthoryear{Androulaki, Barger, Bortnikov, Cachin,
  Christidis, De~Caro, Enyeart, Ferris, Laventman, Manevich, Muralidharan,
  Murthy, Nguyen, Sethi, Singh, Smith, Sorniotti, Stathakopoulou, Vukoli{\'c},
  Cocco, and Yellick}{Androulaki et~al\mbox{.}}{2018}]%
        {ABB+2018}
\bibfield{author}{\bibinfo{person}{Elli Androulaki}, \bibinfo{person}{Artem
  Barger}, \bibinfo{person}{Vita Bortnikov}, \bibinfo{person}{Christian
  Cachin}, \bibinfo{person}{Konstantinos Christidis}, \bibinfo{person}{Angelo
  De~Caro}, \bibinfo{person}{David Enyeart}, \bibinfo{person}{Christopher
  Ferris}, \bibinfo{person}{Gennady Laventman}, \bibinfo{person}{Yacov
  Manevich}, \bibinfo{person}{Srinivasan Muralidharan}, \bibinfo{person}{Chet
  Murthy}, \bibinfo{person}{Binh Nguyen}, \bibinfo{person}{Manish Sethi},
  \bibinfo{person}{Gari Singh}, \bibinfo{person}{Keith Smith},
  \bibinfo{person}{Alessandro Sorniotti}, \bibinfo{person}{Chrysoula
  Stathakopoulou}, \bibinfo{person}{Marko Vukoli{\'c}},
  \bibinfo{person}{Sharon~Weed Cocco}, {and} \bibinfo{person}{Jason Yellick}.}
  \bibinfo{year}{2018}\natexlab{}.
\newblock \showarticletitle{Hyperledger Fabric: {A} Distributed Operating
  System for Permissioned Blockchains}. In
  \bibinfo{booktitle}{\emph{Proceedings of the 13th EuroSys Conference}}.
  \bibinfo{pages}{1--15}.
\newblock


\bibitem[\protect\citeauthoryear{Babaioff, Dobzinski, Oren, and Zohar}{Babaioff
  et~al\mbox{.}}{2012}]%
        {BDO+2012}
\bibfield{author}{\bibinfo{person}{Moshe Babaioff}, \bibinfo{person}{Shahar
  Dobzinski}, \bibinfo{person}{Sigal Oren}, {and} \bibinfo{person}{Aviv
  Zohar}.} \bibinfo{year}{2012}\natexlab{}.
\newblock \showarticletitle{On {B}itcoin and Red Balloons}. In
  \bibinfo{booktitle}{\emph{Proceedings of the 13th ACM Conference on
  Electronic Commerce ({EC})}}. \bibinfo{pages}{56}.
\newblock


\bibitem[\protect\citeauthoryear{Buchman, Kwon, and Milosevic}{Buchman
  et~al\mbox{.}}{2019}]%
        {BKM2019}
\bibfield{author}{\bibinfo{person}{Ethan Buchman}, \bibinfo{person}{Jae Kwon},
  {and} \bibinfo{person}{Zarko Milosevic}.} \bibinfo{year}{2019}\natexlab{}.
\newblock \showarticletitle{The Latest Gossip on {BFT} Consensus}.
\newblock \bibinfo{journal}{\emph{arXiv:1807.04938 [cs]}}
  (\bibinfo{year}{2019}).
\newblock


\bibitem[\protect\citeauthoryear{Buterin and Griffith}{Buterin and
  Griffith}{2017}]%
        {CasperFFG}
\bibfield{author}{\bibinfo{person}{Vitalik Buterin} {and}
  \bibinfo{person}{Virgil Griffith}.} \bibinfo{year}{2017}\natexlab{}.
\newblock \bibinfo{title}{Casper the Friendly Finality Gadget}.
\newblock
\newblock
\urldef\tempurl%
\url{https://doi.org/10.48550/ARXIV.1710.09437}
\showDOI{\tempurl}


\bibitem[\protect\citeauthoryear{Buterin, Hernandez, Kamphefner, Pham, Qiao,
  Ryan, Sin, Wang, and Zhang}{Buterin et~al\mbox{.}}{2020}]%
        {Gasper}
\bibfield{author}{\bibinfo{person}{Vitalik Buterin}, \bibinfo{person}{Diego
  Hernandez}, \bibinfo{person}{Thor Kamphefner}, \bibinfo{person}{Khiem Pham},
  \bibinfo{person}{Zhi Qiao}, \bibinfo{person}{Danny Ryan},
  \bibinfo{person}{Juhyeok Sin}, \bibinfo{person}{Ying Wang}, {and}
  \bibinfo{person}{Yan~X Zhang}.} \bibinfo{year}{2020}\natexlab{}.
\newblock \bibinfo{title}{Combining GHOST and Casper}.
\newblock
\newblock
\urldef\tempurl%
\url{https://doi.org/10.48550/ARXIV.2003.03052}
\showDOI{\tempurl}


\bibitem[\protect\citeauthoryear{Buterin, Reijsbergen, Leonardos, and
  Piliouras}{Buterin et~al\mbox{.}}{2019}]%
        {BRL+}
\bibfield{author}{\bibinfo{person}{Vitalik Buterin}, \bibinfo{person}{Daniël
  Reijsbergen}, \bibinfo{person}{Stefanos Leonardos}, {and}
  \bibinfo{person}{Georgios Piliouras}.} \bibinfo{year}{2019}\natexlab{}.
\newblock \showarticletitle{Incentives in Ethereum’s Hybrid Casper Protocol}.
  In \bibinfo{booktitle}{\emph{2019 IEEE International Conference on Blockchain
  and Cryptocurrency (ICBC)}}. \bibinfo{pages}{236--244}.
\newblock
\urldef\tempurl%
\url{https://doi.org/10.1109/BLOC.2019.8751241}
\showDOI{\tempurl}


\bibitem[\protect\citeauthoryear{Castro and Liskov}{Castro and Liskov}{1999}]%
        {CL1999}
\bibfield{author}{\bibinfo{person}{Miguel Castro} {and}
  \bibinfo{person}{Barbara Liskov}.} \bibinfo{year}{1999}\natexlab{}.
\newblock \showarticletitle{Practical Byzantine Fault Tolerance}. In
  \bibinfo{booktitle}{\emph{Proceedings of the Symposium on Operating Systems
  Design and Implementation ({OSDI})}}. \bibinfo{pages}{173--186}.
\newblock


\bibitem[\protect\citeauthoryear{Chen and Micali}{Chen and Micali}{2019}]%
        {CM2019}
\bibfield{author}{\bibinfo{person}{Jing Chen} {and} \bibinfo{person}{Silvio
  Micali}.} \bibinfo{year}{2019}\natexlab{}.
\newblock \showarticletitle{{A}lgorand: {A} Secure and Efficient Distributed
  Ledger}.
\newblock \bibinfo{journal}{\emph{Theoretical Computer Science}}
  \bibinfo{volume}{777} (\bibinfo{year}{2019}), \bibinfo{pages}{155--183}.
\newblock


\bibitem[\protect\citeauthoryear{Cong, He, and Li}{Cong et~al\mbox{.}}{2021}]%
        {CHL2021}
\bibfield{author}{\bibinfo{person}{Lin~William Cong}, \bibinfo{person}{Zhiguo
  He}, {and} \bibinfo{person}{Jiasun Li}.} \bibinfo{year}{2021}\natexlab{}.
\newblock \showarticletitle{Decentralized Mining in Centralized Pools}.
\newblock \bibinfo{journal}{\emph{The Review of Financial Studies}}
  \bibinfo{volume}{34}, \bibinfo{number}{3} (\bibinfo{year}{2021}),
  \bibinfo{pages}{1191--1235}.
\newblock


\bibitem[\protect\citeauthoryear{David, Ga{\v{z}}i, Kiayias, and Russell}{David
  et~al\mbox{.}}{2018}]%
        {OuroborosPraos}
\bibfield{author}{\bibinfo{person}{Bernardo David}, \bibinfo{person}{Peter
  Ga{\v{z}}i}, \bibinfo{person}{Aggelos Kiayias}, {and}
  \bibinfo{person}{Alexander Russell}.} \bibinfo{year}{2018}\natexlab{}.
\newblock \showarticletitle{Ouroboros Praos: An Adaptively-Secure,
  Semi-synchronous Proof-of-Stake Blockchain}. In
  \bibinfo{booktitle}{\emph{Advances in Cryptology -- EUROCRYPT 2018}}.
  \bibinfo{publisher}{Springer International Publishing},
  \bibinfo{pages}{66--98}.
\newblock
\showISBNx{978-3-319-78375-8}


\bibitem[\protect\citeauthoryear{Diem}{Diem}{2020}]%
        {Diem}
\bibfield{author}{\bibinfo{person}{Diem}.} \bibinfo{year}{2020}\natexlab{}.
\newblock \bibinfo{title}{The {{Diem Association}}}.
\newblock \bibinfo{howpublished}{\url{https://www.diem.com/en-us/}}.
\newblock


\bibitem[\protect\citeauthoryear{Dolev and Strong}{Dolev and Strong}{1983}]%
        {dolev1983authenticated}
\bibfield{author}{\bibinfo{person}{Danny Dolev} {and}
  \bibinfo{person}{H.~Raymond Strong}.} \bibinfo{year}{1983}\natexlab{}.
\newblock \showarticletitle{Authenticated algorithms for {B}yzantine
  agreement}.
\newblock \bibinfo{journal}{\emph{SIAM J. Comput.}} \bibinfo{volume}{12},
  \bibinfo{number}{4} (\bibinfo{year}{1983}), \bibinfo{pages}{656--666}.
\newblock


\bibitem[\protect\citeauthoryear{Ersoy, Ren, Erkin, and Lagendijk}{Ersoy
  et~al\mbox{.}}{2018}]%
        {ERE+2018}
\bibfield{author}{\bibinfo{person}{Oguzhan Ersoy}, \bibinfo{person}{Zhijie
  Ren}, \bibinfo{person}{Zekeriya Erkin}, {and} \bibinfo{person}{Reginald~L.
  Lagendijk}.} \bibinfo{year}{2018}\natexlab{}.
\newblock \showarticletitle{Transaction Propagation on Permissionless
  Blockchains: {I}ncentive and Routing Mechanisms}. In
  \bibinfo{booktitle}{\emph{Proceedings of the Crypto Valley Conference on
  Blockchain Technology ({CVCBT})}}. \bibinfo{pages}{20--30}.
\newblock


\bibitem[\protect\citeauthoryear{Eyal and Sirer}{Eyal and Sirer}{2014}]%
        {ES2014}
\bibfield{author}{\bibinfo{person}{Ittay Eyal} {and}
  \bibinfo{person}{Emin~G{\"u}n Sirer}.} \bibinfo{year}{2014}\natexlab{}.
\newblock \showarticletitle{Majority Is Not Enough: {B}itcoin Mining Is
  Vulnerable}. In \bibinfo{booktitle}{\emph{Proceedings of the International
  Conference on Financial Cryptography and Data Security ({FC})}}.
  \bibinfo{pages}{436--454}.
\newblock


\bibitem[\protect\citeauthoryear{Fischer}{Fischer}{1983}]%
        {fischer1983consensus}
\bibfield{author}{\bibinfo{person}{Michael~J Fischer}.}
  \bibinfo{year}{1983}\natexlab{}.
\newblock \showarticletitle{The consensus problem in unreliable distributed
  systems (a brief survey)}. In \bibinfo{booktitle}{\emph{Proceedings of the
  International Conference on Fundamentals of Computation Theory (FCT)}}.
  \bibinfo{pages}{127--140}.
\newblock


\bibitem[\protect\citeauthoryear{Fooladgar, Manshaei, Jadliwala, and
  Rahman}{Fooladgar et~al\mbox{.}}{2020}]%
        {FMJ+2020}
\bibfield{author}{\bibinfo{person}{Mehdi Fooladgar},
  \bibinfo{person}{Mohammad~Hossein Manshaei}, \bibinfo{person}{Murtuza
  Jadliwala}, {and} \bibinfo{person}{Mohammad~Ashiqur Rahman}.}
  \bibinfo{year}{2020}\natexlab{}.
\newblock \showarticletitle{On Incentive Compatible Role-Based Reward
  Distribution in {A}lgorand}. In \bibinfo{booktitle}{\emph{Proceedings of the
  50th Annual IEEE/IFIP International Conference on Dependable Systems and
  Networks ({DSN})}}. \bibinfo{pages}{452--463}.
\newblock


\bibitem[\protect\citeauthoryear{for Alternative~Finance}{for
  Alternative~Finance}{2022}]%
        {BitcoinEnergy}
\bibfield{author}{\bibinfo{person}{Cambridge~Centre for Alternative~Finance}.}
  \bibinfo{year}{2022}\natexlab{}.
\newblock \bibinfo{title}{Cambridge {B}itcoin Electricity Consumption Index
  ({CBECI})}.
\newblock \bibinfo{howpublished}{\url{https://ccaf.io/cbeci/index/}}.
\newblock


\bibitem[\protect\citeauthoryear{Foundation}{Foundation}{2019a}]%
        {MainnetSpec}
\bibfield{author}{\bibinfo{person}{Algorand Foundation}.}
  \bibinfo{year}{2019}\natexlab{a}.
\newblock \bibinfo{title}{{A}lgorand Blockchain Features Specification Version
  1.0}.
\newblock
  \bibinfo{howpublished}{\url{https://github.com/algorandfoundation/specs/blob/5615adc36bad610c7f165fa2967f4ecfa75125f0/overview/Algorand_v1_spec-2.pdf}}.
\newblock


\bibitem[\protect\citeauthoryear{Foundation}{Foundation}{2019b}]%
        {GoAlgorand}
\bibfield{author}{\bibinfo{person}{Algorand Foundation}.}
  \bibinfo{year}{2019}\natexlab{b}.
\newblock \bibinfo{title}{{A}lgorand's official implementation in Go. -
  GitHub}.
\newblock
  \bibinfo{howpublished}{\url{https://github.com/algorand/go-algorand}}.
\newblock


\bibitem[\protect\citeauthoryear{Foundation}{Foundation}{2021a}]%
        {carbonNegative}
\bibfield{author}{\bibinfo{person}{Algorand Foundation}.}
  \bibinfo{year}{2021}\natexlab{a}.
\newblock \bibinfo{title}{Algorand Pledges to be the Greenest Blockchain with a
  Carbon-Negative Network Now and in the Future}.
\newblock
  \bibinfo{howpublished}{\url{https://www.algorand.com/resources/algorand-announcements/carbon_negative_announcement}}.
\newblock


\bibitem[\protect\citeauthoryear{Foundation}{Foundation}{2021b}]%
        {governance-rewards}
\bibfield{author}{\bibinfo{person}{Algorand Foundation}.}
  \bibinfo{year}{2021}\natexlab{b}.
\newblock \bibinfo{title}{Introducing {G}overnance: {E}arn rewards for your
  participation in the decision making}.
\newblock \bibinfo{howpublished}{\url{https://algorand.foundation/governance}}.
\newblock


\bibitem[\protect\citeauthoryear{Gilad, Hemo, Micali, Vlachos, and
  Zeldovich}{Gilad et~al\mbox{.}}{2017}]%
        {GHM+2017}
\bibfield{author}{\bibinfo{person}{Yossi Gilad}, \bibinfo{person}{Rotem Hemo},
  \bibinfo{person}{Silvio Micali}, \bibinfo{person}{Georgios Vlachos}, {and}
  \bibinfo{person}{Nickolai Zeldovich}.} \bibinfo{year}{2017}\natexlab{}.
\newblock \showarticletitle{{A}lgorand: {S}caling Byzantine Agreements for
  Cryptocurrencies}. In \bibinfo{booktitle}{\emph{Proceedings of the 26th
  Symposium on Operating Systems Principles ({SOSP})}}.
  \bibinfo{pages}{51--68}.
\newblock


\bibitem[\protect\citeauthoryear{Goldreich}{Goldreich}{2007}]%
        {goldreich2007foundations}
\bibfield{author}{\bibinfo{person}{Oded Goldreich}.}
  \bibinfo{year}{2007}\natexlab{}.
\newblock \bibinfo{booktitle}{\emph{Foundations of cryptography: {V}olume 1,
  basic tools}}.
\newblock \bibinfo{publisher}{Cambridge university press}.
\newblock


\bibitem[\protect\citeauthoryear{Heilman, Kendler, Zohar, and Goldberg}{Heilman
  et~al\mbox{.}}{2015}]%
        {heilman2015eclipse}
\bibfield{author}{\bibinfo{person}{Ethan Heilman}, \bibinfo{person}{Alison
  Kendler}, \bibinfo{person}{Aviv Zohar}, {and} \bibinfo{person}{Sharon
  Goldberg}.} \bibinfo{year}{2015}\natexlab{}.
\newblock \showarticletitle{Eclipse Attacks on {B}itcoin’s Peer-to-Peer
  Network}. In \bibinfo{booktitle}{\emph{Proceedings of the 24th USENIX
  Security Symposium ({USENIX Security})}}. \bibinfo{pages}{129--144}.
\newblock


\bibitem[\protect\citeauthoryear{Helliar, Crawford, Rocca, Teodori, and
  Veneziani}{Helliar et~al\mbox{.}}{2020}]%
        {helliar2020}
\bibfield{author}{\bibinfo{person}{Christine~V Helliar},
  \bibinfo{person}{Louise Crawford}, \bibinfo{person}{Laura Rocca},
  \bibinfo{person}{Claudio Teodori}, {and} \bibinfo{person}{Monica Veneziani}.}
  \bibinfo{year}{2020}\natexlab{}.
\newblock \showarticletitle{Permissionless and permissioned blockchain
  diffusion}.
\newblock \bibinfo{journal}{\emph{International Journal of Information
  Management}}  \bibinfo{volume}{54} (\bibinfo{year}{2020}),
  \bibinfo{pages}{102--136}.
\newblock


\bibitem[\protect\citeauthoryear{Hou, Zhou, Ji, Daian, Tram{\`e}r, Fanti, and
  Juels}{Hou et~al\mbox{.}}{2021}]%
        {HZJ+2021}
\bibfield{author}{\bibinfo{person}{Charlie Hou}, \bibinfo{person}{Mingxun
  Zhou}, \bibinfo{person}{Yan Ji}, \bibinfo{person}{Phil Daian},
  \bibinfo{person}{Florian Tram{\`e}r}, \bibinfo{person}{Giulia Fanti}, {and}
  \bibinfo{person}{Ari Juels}.} \bibinfo{year}{2021}\natexlab{}.
\newblock \showarticletitle{{SquirRL}: {A}utomating Attack Analysis on
  Blockchain Incentive Mechanisms with Deep Reinforcement Learning}. In
  \bibinfo{booktitle}{\emph{Proceedings of the Network and Distributed System
  Security Symposium ({NDSS})}}.
\newblock


\bibitem[\protect\citeauthoryear{Kiayias, Russell, David, and
  Oliynykov}{Kiayias et~al\mbox{.}}{2017}]%
        {Ouroboros}
\bibfield{author}{\bibinfo{person}{Aggelos Kiayias}, \bibinfo{person}{Alexander
  Russell}, \bibinfo{person}{Bernardo David}, {and} \bibinfo{person}{Roman
  Oliynykov}.} \bibinfo{year}{2017}\natexlab{}.
\newblock \showarticletitle{Ouroboros: A Provably Secure Proof-of-Stake
  Blockchain Protocol}. In \bibinfo{booktitle}{\emph{Advances in Cryptology --
  CRYPTO 2017}}. \bibinfo{publisher}{Springer International Publishing},
  \bibinfo{pages}{357--388}.
\newblock
\showISBNx{978-3-319-63688-7}


\bibitem[\protect\citeauthoryear{King and Nadal}{King and Nadal}{2012}]%
        {KN2012}
\bibfield{author}{\bibinfo{person}{Sunny King} {and} \bibinfo{person}{Scott
  Nadal}.} \bibinfo{year}{2012}\natexlab{}.
\newblock \showarticletitle{{P}pcoin: Peer-to-peer crypto-currency with
  proof-of-stake}.
\newblock  (\bibinfo{year}{2012}).
\newblock


\bibitem[\protect\citeauthoryear{Labs}{Labs}{2020}]%
        {AlgoExplorer}
\bibfield{author}{\bibinfo{person}{Rand Labs}.}
  \bibinfo{year}{2020}\natexlab{}.
\newblock \bibinfo{title}{Algorand (ALGO) Blockchain Explorer}.
\newblock \bibinfo{howpublished}{\url{https://algoexplorer.io/}}.
\newblock


\bibitem[\protect\citeauthoryear{Lamport}{Lamport}{2001}]%
        {Lamport2001}
\bibfield{author}{\bibinfo{person}{Leslie Lamport}.}
  \bibinfo{year}{2001}\natexlab{}.
\newblock \showarticletitle{Paxos Made Simple}.
\newblock \bibinfo{journal}{\emph{ACM SIGACT News (Distributed Computing
  Column) 32, 4 (Whole Number 121)}} (\bibinfo{year}{2001}).
\newblock


\bibitem[\protect\citeauthoryear{Lamport, Shostak, and Pease}{Lamport
  et~al\mbox{.}}{1982}]%
        {LSP1982}
\bibfield{author}{\bibinfo{person}{Leslie Lamport}, \bibinfo{person}{Robert
  Shostak}, {and} \bibinfo{person}{Marshall Pease}.}
  \bibinfo{year}{1982}\natexlab{}.
\newblock \showarticletitle{The Byzantine Generals Problem}.
\newblock \bibinfo{journal}{\emph{ACM Transactions on Programming Languages and
  Systems}} (\bibinfo{year}{1982}), \bibinfo{pages}{382--401}.
\newblock


\bibitem[\protect\citeauthoryear{Leshno and Strack}{Leshno and Strack}{2019}]%
        {LS2019}
\bibfield{author}{\bibinfo{person}{Jacob Leshno} {and} \bibinfo{person}{Philipp
  Strack}.} \bibinfo{year}{2019}\natexlab{}.
\newblock \showarticletitle{{B}itcoin: {A}n Impossibility Theorem for
  Proof-of-Work Based Protocols}.
\newblock \bibinfo{journal}{\emph{SSRN Electronic Journal}}
  (\bibinfo{year}{2019}).
\newblock


\bibitem[\protect\citeauthoryear{Li, Clement, Wong, Napper, Roy, Alvisi, and
  Dahlin}{Li et~al\mbox{.}}{2006}]%
        {LCW+2006}
\bibfield{author}{\bibinfo{person}{Harry~C. Li}, \bibinfo{person}{Allen
  Clement}, \bibinfo{person}{Edmund~L. Wong}, \bibinfo{person}{Jeff Napper},
  \bibinfo{person}{Indrajit Roy}, \bibinfo{person}{Lorenzo Alvisi}, {and}
  \bibinfo{person}{Michael Dahlin}.} \bibinfo{year}{2006}\natexlab{}.
\newblock \showarticletitle{{BAR G}ossip}. In
  \bibinfo{booktitle}{\emph{Proceedings of the 7th Symposium on Operating
  Systems Design and Implementation ({OSDI})}}. \bibinfo{pages}{191–204}.
\newblock


\bibitem[\protect\citeauthoryear{Liu, Luong, Wang, Niyato, Wang, Liang, and
  Kim}{Liu et~al\mbox{.}}{2019}]%
        {liu2019}
\bibfield{author}{\bibinfo{person}{Ziyao Liu}, \bibinfo{person}{Nguyen~Cong
  Luong}, \bibinfo{person}{Wenbo Wang}, \bibinfo{person}{Dusit Niyato},
  \bibinfo{person}{Ping Wang}, \bibinfo{person}{Ying-Chang Liang}, {and}
  \bibinfo{person}{Dong~In Kim}.} \bibinfo{year}{2019}\natexlab{}.
\newblock \showarticletitle{A survey on blockchain: {A} game theoretical
  perspective}.
\newblock \bibinfo{journal}{\emph{IEEE Access}}  \bibinfo{volume}{7}
  (\bibinfo{year}{2019}), \bibinfo{pages}{47615--47643}.
\newblock


\bibitem[\protect\citeauthoryear{Micali, Rabin, and Vadhan}{Micali
  et~al\mbox{.}}{1999}]%
        {micali1999verifiable}
\bibfield{author}{\bibinfo{person}{Silvio Micali}, \bibinfo{person}{Michael
  Rabin}, {and} \bibinfo{person}{Salil Vadhan}.}
  \bibinfo{year}{1999}\natexlab{}.
\newblock \showarticletitle{Verifiable random functions}. In
  \bibinfo{booktitle}{\emph{Proceedings of the 40th Annual Symposium on
  Foundations of Computer Science ({FOCS})}}. IEEE, \bibinfo{pages}{120--130}.
\newblock


\bibitem[\protect\citeauthoryear{Nakamoto}{Nakamoto}{2008}]%
        {Nakamoto2008}
\bibfield{author}{\bibinfo{person}{Satoshi Nakamoto}.}
  \bibinfo{year}{2008}\natexlab{}.
\newblock \bibinfo{title}{{B}itcoin: {A} Peer-to-Peer Electronic Cash System}.
\newblock
\newblock


\bibitem[\protect\citeauthoryear{Nayak, Kumar, Miller, and Shi}{Nayak
  et~al\mbox{.}}{2016}]%
        {NKM+2016}
\bibfield{author}{\bibinfo{person}{Kartik Nayak}, \bibinfo{person}{Srijan
  Kumar}, \bibinfo{person}{Andrew Miller}, {and} \bibinfo{person}{Elaine Shi}.}
  \bibinfo{year}{2016}\natexlab{}.
\newblock \showarticletitle{Stubborn Mining: {G}eneralizing Selfish Mining and
  Combining with an {E}clipse Attack}. In \bibinfo{booktitle}{\emph{2016 IEEE
  European Symposium on Security and Privacy ({EuroS}\&{P})}}.
  \bibinfo{pages}{305--320}.
\newblock


\bibitem[\protect\citeauthoryear{Ongaro and Ousterhout}{Ongaro and
  Ousterhout}{2014}]%
        {OO2014}
\bibfield{author}{\bibinfo{person}{Diego Ongaro} {and} \bibinfo{person}{John
  Ousterhout}.} \bibinfo{year}{2014}\natexlab{}.
\newblock \showarticletitle{In Search of an Understandable Consensus
  Algorithm}. In \bibinfo{booktitle}{\emph{Proceedings of the USENIX Conference
  on USENIX Annual Technical Conference ({USENIX ATC})}}.
  \bibinfo{pages}{305--320}.
\newblock


\bibitem[\protect\citeauthoryear{Pass and Shi}{Pass and Shi}{2017}]%
        {PS2017}
\bibfield{author}{\bibinfo{person}{Rafael Pass} {and} \bibinfo{person}{Elaine
  Shi}.} \bibinfo{year}{2017}\natexlab{}.
\newblock \showarticletitle{{FruitChains}: {A} Fair Blockchain}. In
  \bibinfo{booktitle}{\emph{Proceedings of the ACM Symposium on Principles of
  Distributed Computing ({PODC})}}. \bibinfo{pages}{315--324}.
\newblock


\bibitem[\protect\citeauthoryear{Pease, Shostak, and Lamport}{Pease
  et~al\mbox{.}}{1980}]%
        {pease1980reaching}
\bibfield{author}{\bibinfo{person}{Marshall Pease}, \bibinfo{person}{Robert
  Shostak}, {and} \bibinfo{person}{Leslie Lamport}.}
  \bibinfo{year}{1980}\natexlab{}.
\newblock \showarticletitle{Reaching agreement in the presence of faults}.
\newblock \bibinfo{journal}{\emph{Journal of the ACM (JACM)}}
  \bibinfo{volume}{27}, \bibinfo{number}{2} (\bibinfo{year}{1980}),
  \bibinfo{pages}{228--234}.
\newblock


\bibitem[\protect\citeauthoryear{QuantumMechanic}{QuantumMechanic}{2011}]%
        {BitcoinWiki-PoS}
\bibfield{author}{\bibinfo{person}{QuantumMechanic}.}
  \bibinfo{year}{2011}\natexlab{}.
\newblock \bibinfo{title}{Proof of Stake Instead of Proof of Work}.
\newblock
  \bibinfo{howpublished}{\url{https://bitcointalk.org/index.php?topic=27787.0}}.
\newblock


\bibitem[\protect\citeauthoryear{Rocket, Yin, Sekniqi, van Renesse, and
  Sirer}{Rocket et~al\mbox{.}}{2019}]%
        {AVA}
\bibfield{author}{\bibinfo{person}{Team Rocket}, \bibinfo{person}{Maofan Yin},
  \bibinfo{person}{Kevin Sekniqi}, \bibinfo{person}{Robbert van Renesse}, {and}
  \bibinfo{person}{Emin~Gün Sirer}.} \bibinfo{year}{2019}\natexlab{}.
\newblock \bibinfo{title}{Scalable and Probabilistic Leaderless BFT Consensus
  through Metastability}.
\newblock
\newblock
\urldef\tempurl%
\url{https://doi.org/10.48550/ARXIV.1906.08936}
\showDOI{\tempurl}


\bibitem[\protect\citeauthoryear{Sapirshtein, Sompolinsky, and
  Zohar}{Sapirshtein et~al\mbox{.}}{2016}]%
        {SSZ2016}
\bibfield{author}{\bibinfo{person}{Ayelet Sapirshtein},
  \bibinfo{person}{Yonatan Sompolinsky}, {and} \bibinfo{person}{Aviv Zohar}.}
  \bibinfo{year}{2016}\natexlab{}.
\newblock \showarticletitle{Optimal Selfish Mining Strategies in {B}itcoin}. In
  \bibinfo{booktitle}{\emph{Proceedings of the International Conference on
  Financial Cryptography and Data Security ({FC})}}. \bibinfo{pages}{515--532}.
\newblock


\bibitem[\protect\citeauthoryear{Schrijvers, Bonneau, Boneh, and
  Roughgarden}{Schrijvers et~al\mbox{.}}{2016}]%
        {SBB+2016}
\bibfield{author}{\bibinfo{person}{Okke Schrijvers}, \bibinfo{person}{Joseph
  Bonneau}, \bibinfo{person}{Dan Boneh}, {and} \bibinfo{person}{Tim
  Roughgarden}.} \bibinfo{year}{2016}\natexlab{}.
\newblock \showarticletitle{Incentive Compatibility of {B}itcoin Mining Pool
  Reward Functions}. In \bibinfo{booktitle}{\emph{Proceedings of the
  International Conference on Financial Cryptography and Data Security
  ({FC})}}. \bibinfo{pages}{477--498}.
\newblock


\bibitem[\protect\citeauthoryear{Shoham and Leyton-Brown}{Shoham and
  Leyton-Brown}{2008}]%
        {SL2009}
\bibfield{author}{\bibinfo{person}{Yoav Shoham} {and} \bibinfo{person}{Kevin
  Leyton-Brown}.} \bibinfo{year}{2008}\natexlab{}.
\newblock \bibinfo{booktitle}{\emph{Multiagent systems: {A}lgorithmic,
  game-theoretic, and logical foundations}}.
\newblock \bibinfo{publisher}{Cambridge University Press}.
\newblock


\bibitem[\protect\citeauthoryear{Sompolinsky and Zohar}{Sompolinsky and
  Zohar}{2015}]%
        {GHOST}
\bibfield{author}{\bibinfo{person}{Yonatan Sompolinsky} {and}
  \bibinfo{person}{Aviv Zohar}.} \bibinfo{year}{2015}\natexlab{}.
\newblock \showarticletitle{Secure High-Rate Transaction Processing in
  Bitcoin}. In \bibinfo{booktitle}{\emph{Financial Cryptography and Data
  Security}}. \bibinfo{publisher}{Springer Berlin Heidelberg},
  \bibinfo{address}{Berlin, Heidelberg}, \bibinfo{pages}{507--527}.
\newblock
\showISBNx{978-3-662-47854-7}


\bibitem[\protect\citeauthoryear{Stewart and Kokoris-Kogia}{Stewart and
  Kokoris-Kogia}{2020}]%
        {GRANDPA}
\bibfield{author}{\bibinfo{person}{Alistair Stewart} {and}
  \bibinfo{person}{Eleftherios Kokoris-Kogia}.}
  \bibinfo{year}{2020}\natexlab{}.
\newblock \bibinfo{title}{GRANDPA: a Byzantine Finality Gadget}.
\newblock
\newblock
\urldef\tempurl%
\url{https://doi.org/10.48550/ARXIV.2007.01560}
\showDOI{\tempurl}


\bibitem[\protect\citeauthoryear{Wood et~al\mbox{.}}{Wood
  et~al\mbox{.}}{2014}]%
        {wood2014ethereum}
\bibfield{author}{\bibinfo{person}{Gavin Wood} {et~al\mbox{.}}}
  \bibinfo{year}{2014}\natexlab{}.
\newblock \showarticletitle{{E}thereum: {A} secure decentralised generalised
  transaction ledger}.
\newblock \bibinfo{journal}{\emph{Ethereum project yellow paper}}
  \bibinfo{volume}{151}, \bibinfo{number}{2014} (\bibinfo{year}{2014}),
  \bibinfo{pages}{1--32}.
\newblock


\bibitem[\protect\citeauthoryear{Zhang, Zhang, and Kemme}{Zhang
  et~al\mbox{.}}{2020}]%
        {ZZK2020}
\bibfield{author}{\bibinfo{person}{Shiquan Zhang}, \bibinfo{person}{Kaiwen
  Zhang}, {and} \bibinfo{person}{Bettina Kemme}.}
  \bibinfo{year}{2020}\natexlab{}.
\newblock \showarticletitle{A Simulation-Based Analysis of Multiplayer Selfish
  Mining}. In \bibinfo{booktitle}{\emph{Proceedings of the IEEE International
  Conference on Blockchain and Cryptocurrency ({ICBC})}}.
  \bibinfo{pages}{1--5}.
\newblock


\end{thebibliography}

\newpage
\appendix
\section{Notations}
\label{app:notations}
In this appendix, we provides a summary of all the parameters we used in this paper.
\reftab{tab:notations} details the parameters used to describe the Algorand protocol, costs and rewards, and the Algorand game.
\begin{table}[!htbp]
\centering
\caption{Parameters for costs and rewards}
\label{tab:notations}
\begin{tabular}{ ll }
    \toprule
    \textbf{Parameter} & \textbf{Description} \\
    \toprule
    $\sigma_k$ & Committee credential of a node \\
    $x_k$ & Hash of $\sigma_k$ \\
    $.x_k$ & Interpretation of $x_k$ as a number between 0 and 1 \\
    $\tau$ & Expected number of sub-nodes selected at each step \\
    $T$ & Fraction of $\tau$ that defines Algorand's voting threshold \\
    $p_c$ & Probability that an agent is selected as a committee member \\
    $p_b$ & Probability that an agent is corrupted by the adversary \\
    $K_{\max}$ & Maximum number of steps to add a new block \\
    $C_i^b(k)$ & Baseline cost incurred by agent $i$ at step $k$ \\
    $C^c(k)$ & Committee cost incurred by any committee member at step $k$ \\
    $C_i(k)$ & Total cost incurred by agent $i$ at step $k$ \\
    $R_i^{b}(a, \theta, k)$ & Baseline reward of cooperating agent $i$ at step $k$ \\
    $R_i^c(a, \theta, k)$ & Committee reward of cooperating agent $i$ at step $k$ \\
    $R_i(a, \theta, k)$ & Total reward of cooperating agent $i$ at step $k$ \\
    $N$ & Set of all agents \\
    $w_i$ & Number of sub-nodes of agent $i$ \\
    $W$ & Total number of sub-nodes for all agents ($W = \sum_{i \in N} w_i$) \\
    $N_i$ & Set of agent $i$'s randomly selected peers \\
    $\hat{N}_i$ & Set of agents of which agent $i$ is a randomly selected peer \\
    $n_{rs}$ & Number of each node's randomly selected peers ($n_{rs} = \vert N_i \vert$) \\
    $a_i$ & Action taken by agent $i$ \\
    $A_i$ & Set of actions available to agent $i$ \\
    $a$ & Joint action profile of all agents ($a = (a_1,\dots, a_n)$) \\
    $A$ & Set of possible joint action profiles \\
    $\theta_{i,0}$ & Binary variable indicating whether agent $i$ is Byzantine or not \\
    $\theta_{i,k}$ & Number of agent $i$'s sub-nodes selected at step $k > 0$ \\
    $\theta_i$ & Type vector of agent $i$ ($\theta_i = (\theta_{i,1},\dots,\theta_{i,K_{\max}})$) \\
    $\Theta_i$ & Type space of agent $i$ \\
    $\theta$ & Joint type vector of all agents ($\theta = (\theta_{1},\dots,\theta_{n})$) \\
    $\Theta$ & Set of possible joint type vectors \\
    $s_i$ & Strategy of agent $i$ \\
    $s_i(a_i \given \theta_i)$ & Probability assigned to action $a_i$ by $s_i$ given $\theta_i$ \\
    $S_i$ & Set of all strategies for agent $i$ \\
    $s$ & Joint strategy profile for all agents \\
    $S$ & Set of possible joint strategy profiles \\
    $EU_i(s)$ & Expected utility of agent $i$ given strategy profile $s$ \\
    $BR_i(s_{-i})$ & Best response of agent $i$ given $s_{-i}$ \\    
    \bottomrule
\end{tabular}
\end{table}
\newpage
\section{Proof of Lemma (6)}
\label{app:utility-proof}

If agent $i$ is corrupted by the adversary, it strictly prefers action $M$, which leads to a payoff of zero as defined in \refsec{sec:algorand-game}.
If agent $i$ is not corrupted, it takes action $D$ with probability $1 - s_c$ for a payoff of zero.
We can write the expected utility of agent $i$ for $(s_i, s^*_{-i})$ as:
\[
\EU_i((s_i, s^*_{-i})) = s_c \; (1-p_b) \; \EX_{\theta}\left[\sum_{k=1}^{K(\theta)} u_{i}(a^*(\theta), \theta, k) \given[\Big] \theta_{i,0} = 0\right].
\]
We can write the expectation in the above formula as follows.
\begin{align*}
\EX_{\theta}&\left[\sum_{k=1}^{K(\theta)} u_{i}(a^*(\theta), \theta, k) \given[\Big] \theta_{i,0} = 0\right] \\
= & \sum_{\ell=1}^{\kmax} \EP(K(\theta) = \ell) \; \EX_{\theta}\left[\sum_{k=1}^{\ell} u_i(a^*(\theta), \theta, k) \given[\Big]  \theta_{i,0} = 0, K(\theta) = \ell\right] \\ 
= & \sum_{\ell=1}^{\kmax} \EP(K(\theta) = \ell) \sum_{k=1}^{\ell} \EX_{\theta}\left[u_i(a^*(\theta), \theta, k) \given \theta_{i,0} = 0, K(\theta) = \ell\right] \\ 
= & \sum_{\ell=1}^{\kmax} \sum_{k=1}^{\ell} \EP(K(\theta) = \ell) \; \EX_{\theta}\left[R_i(a^*(\theta), \theta, k) - C_i(a^*(\theta), \theta, k) \given \theta_{i,0} = 0, K(\theta) = \ell\right].
\end{align*}

The second equality holds because the expected value of the sum of random variables is equal to the sum of their expectations. 
Given \refequ{equ:our_reward}, we have the following for $R_i$.
\begin{align*}
\EP(K&(\theta) = \ell) \; \EX_{\theta}\left[R_i(a^*(\theta), \theta, k)  \given \theta_{i,0} = 0, K(\theta) = \ell\right] \\
= & \; (R^{b}(k) / n_{rs}) \sum_{j \in N_i} \EP(K(\theta) = \ell) \; \EP(\theta_{j,k} \le p_c, a^*_j(\theta) = C \given \theta_{i,0} = 0, K(\theta) = \ell) \\
& + R^c(k) \; \EP(K(\theta) = \ell) \; \EP( \theta_{i,k} \le p_c \given \theta_{i,0} = 0, K(\theta) = \ell) \\
= & \; (R^{b}(k) / n_{rs}) \sum_{j \in N_i} \EP(K(\theta) = \ell) \; \EP(\theta_{j,k} \le p_c, \theta_{j,0} = 0 \given K(\theta) = \ell) \\
& + R^c(k) \; \EP(K(\theta) = \ell) \; \EP( \theta_{i,k} \le p_c \given K(\theta) = \ell) \\
= & \; (R^{b}(k) / n_{rs}) \sum_{j \in N_i} \EP(\theta_{j,k} \le p_c, \theta_{j,k} = 0) \; \EP(K(\theta) = \ell \given \theta_{j,k} \le p_c, \theta_{j,0} = 0) \\
& + R^c(k) \; \EP(\theta_{i,k} \le p_c) \; \EP(K(\theta) = \ell \given \theta_{i,k} \le p_c) \\
= & \; (R^{b}(k) / n_{rs}) \sum_{j \in N_i} p_c \; (1 - p_b) \; \EP(K(\theta) = \ell \given \theta_{j,k} \le p_c, \theta_{j,0} = 0) \\
& + R^c(k) \; p_c \; \EP(K(\theta) = \ell \given \theta_{i,k} \le p_c).
\end{align*}
The second equality holds since there is no dependency between any two elements of $\theta$, and according to the definition of $a^*$, $a^*_j(\theta) = C$ if and only if $\theta_{j,0} = 0$.
The third equality follows simply from Bayes' rule.

The number of steps needed to complete the Algorand protocol depends on whether the highest-priority committee member at steps $k \in \{ 1,  7, 10, \dots, \kmax \}$ is corrupted (see Theorem 1 in \cite{CM2019}).
Let $E_j$ be a binary random variable that takes value 1 if agent $j$ is not the highest-priority committee member at any of steps $k \in \{ 1,  7, 10, \dots, \kmax \}$ and takes value 0 otherwise.
If $E_j = 1$, then agent $j$'s action and type do not affect the number of steps it takes the Algorand protocol to complete.

Agent $j$ becomes the highest-priority committee member at any step $k$ if $\theta_{j,k} < \theta_{j^\prime,k}$ for all $j^\prime \in N$.
$\theta_{j,k}$ is uniformly distributed between 0 and 1.
Therefore, the probability that an agent is the highest-priority committee member at any steps is $1/n$.
Since $\theta_{j,k}$'s are independent and identically distributed, we have $\EP(E_j = 0) \le \kmax/3n$, and we can write:
\begin{align}
\label{equ:asymptotic-independence}
\EP(K&(\theta) = \ell \given \theta_{j,k} \le p_c, \theta_{j,k} = 0) \nonumber \\
= & \EP(K(\theta) = \ell \given \theta_{j,k} \le p_c, \theta_{j,k} = 0, E_j = 0) \; \EP(E_j = 0) \nonumber \\
& + \EP(K(\theta) = \ell \given \theta_{j,k} \le p_c, \theta_{j,k} = 0, E_j = 1) \; \EP(E_j = 1) \nonumber \\
\simeq & \EP(K(\theta) = \ell \given \theta_{j,k} \le p_c, \theta_{j,k} = 0, E_j = 1) = \EP(K(\theta) = \ell \given E_j = 1) = \EP(K(\theta) = \ell).
\end{align}
The first equality follows from the law of total probability.
The second equality holds asymptotically since $\EP(E_j = 0)$ goes to zero as $n$ goes to infinity.
The third equality holds since given $E_j = 1$, the number of steps needed to complete the protocol does not depend on agent $j$'s type.
Finally, the fourth equality holds since $E_j = 1$ and $K(\theta) = \ell$ are independent events.
Knowing that agent $j$ is not the highest-priority agent at some steps does not reveal any information about the number of steps needed to complete the protocol.
The same argument can be applied to show that $\EP(K(\theta) = \ell \given \theta_{i,k} \le p_c) = \EP(K(\theta) = \ell)$.

Given \refequ{equ:cost}, we have  the following for $C_i$.
\begin{align}
\label{equ:expected-costs}
\EP(K&(\theta) = \ell) \; \EX_{\theta}\left[C_i(a^*(\theta), \theta, k)  \given \theta_{i,0} = 0, K(\theta) = \ell\right] \nonumber \\
= & \; \EP(K(\theta) = \ell) \; (\bar{C}^{b}(k) + C^c(k) \; \EP( \theta_{i,k} \le p_c \given \theta_{i,0} = 0, K(\theta) = \ell)) \nonumber \\
= & \; \EP(K(\theta) = \ell) \; \bar{C}^{b}(k) + C^c(k) \; \EP(\theta_{i,k} \le p_c) \; \EP(K(\theta) = \ell \given \theta_{i,k} \le p_c) \nonumber \\
= & \; \EP(K(\theta) = \ell) \; \bar{C}^{b}(k)  +  C^c(k) \; p_c \; \EP(K(\theta) = \ell).
\end{align}
The second equality follows from Bayes' rule, and the third equality holds as shown for $R_i$.
Note that for this equality equality, we also use the fact that there is no dependency between $\theta_{i,0}$ and $\theta_{i,k}$.
This completes the proof.

\section{Algorand}
\label{sec:algorand}

In this section we present a summary of the Algorand protocol \cite{GHM+2017, CM2019}%
\footnote{We only present the details that are needed for our game-theoretic model.
Interested readers could see \cite{GHM+2017, CM2019} for a detailed description and analysis of the protocol}.
The protocol maintains a public, permissionless blockchain.
Adding a new block to the blockchain requires multiple steps.
At each step, a committee of randomly selected nodes is formed.
Committee members play different roles at different steps.
Each committee member proposes a new block at step 1 (see \refsec{sec:block-proposal}).
At the subsequent steps, committee members vote to reach consensus on the block that should be added to the blockchain (see \refsecs{sec:reduction}{sec:bba}).

\subsection{Step 1 (Block Proposal)}
\label{sec:block-proposal}

\refalg{alg:block-proposal} summarizes the procedure executed by all nodes at step one.
First, nodes run the sortition algorithm to determine whether they are in the committee at step 1.
A node $i$ is in the committee if the corresponding $.x_{i,1}$ is less than or equal to $p_c$.
A node that is in the committee generates a candidate block%
\footnote{A candidate block contains a set of pending transactions that a node has heard about.}
and propagates the block to other nodes using a peer-to-peer gossip protocol%
\footnote{A gossip protocol is a communication process that disseminates data to all nodes in the system.
In the most common implementation, upon receiving a message, a node selects a small random set of peers to gossip the message to.
To prevent loops, nodes do not relay the same message twice.} (see \refsec{sec:algorand-prot}).

Nodes also propagate the committee credentials.
This allows other nodes to verify the sortition results of committee members.
The hashes of the credentials are used to prioritize committee members.
The smaller this hash is, the higher the priority of the node will be.
Prioritizing committee members enables nodes to reach consensus on a single block proposed by the highest-priority committee member.

\begin{algorithm}[!t]
  ($x_{i,1}$, $\sigma_{i,1}$) $\gets$ Sortition(1)\;
  \If{$.x_{i,1}$ $\le$  $p_c$}{
    $B$ $\gets$ Generate block proposal\;
    Propagate($B$, $\sigma_{i,1}$)\;
  }
\caption{Step-1 procedure}
\label{alg:block-proposal}
\end{algorithm}

\subsection{Step 2 \& 3 (Graded Consensus)}
\label{sec:reduction}

At step 2, nodes run the first step of the Graded Consensus (GC) protocol.
The main goal of the graded consensus (GC) protocol is to reduce the number of candidate blocks to one and to convert the problem of reaching consensus on an arbitrary value (the hash of a block) to reaching consensus on a single bit.
\refalg{alg:gc-protocol1} shows the pseudo-code for step 2.
Nodes wait a fixed time period to receive block proposals.
While waiting, nodes validate any received blocks and gossip valid ones to their neighbors according to the gossip protocol.
Once the time period expires, nodes invoke the sortition algorithm to check whether they are in the committee for this step.
If selected, the node generates its vote according to the first step of the GC protocol.
The node then propagates its vote.
The node also propagates its' credential to allow other nodes to verify its membership.
Note that if $M$ sub-nodes of a node are selected, then the node's vote is counted as $M$ \emph{sub-votes}.

\begin{algorithm}[!t]
  Validate and gossip received block proposals for a fixed time period\;
  ($x_{i,2}$, $\sigma_{i,2}$) $\gets$ Sortition(2)\;
  \If{$.x_{i,2}$ $\le$  $p_c$}{
    $v$ $\gets$ Generate vote according to GC's 1st step\;
    Propagate($v$, $\sigma_{i,2}$)\;
  }
\caption{Step-2 procedure}
\label{alg:gc-protocol1}
\end{algorithm}

\refalg{alg:gc-protocol2} summarizes the procedure for step 3.
In this step, nodes run the second step of the GC protocol.
First, nodes validate and gossip any received votes from step 2 while waiting a fixed period of time.
Nodes then run the sortition algorithm.
If a node is in the committee for this step, then it generates its vote according to the second step of the GC protocol.
The node then propagates its vote and credential.

\begin{algorithm}[!t]
  Validate and gossip received step-2 votes for a fixed time period\;
  ($x_{i,3}$, $\sigma_{i,3}$) $\gets$ Sortition(3)\;
  \If{$.x_{i,3}$ $\le$  $p_c$}{
    $v$ $\gets$ Generate vote according to GC's 2nd step\;
    Propagate($v$, $\sigma_{i,3}$)\;
  }
\caption{Step-3 procedure}
\label{alg:gc-protocol2}
\end{algorithm}

\subsection{Step 4 to \texorpdfstring{K\textsubscript{max}}{K\_max} (Binary Byzantine Agreement) }
\label{sec:bba}
At step 4, nodes compute the output of the GC protocol and start the first step of the Binary Byzantine Agreement (BBA*) protocol.
The main goal of the Binary Byzantine Agreement BBA* protocol is to reach consensus among all nodes on a binary value.
The pseudo-code for step 4 is shown in \refalg{alg:bba1}.
Nodes first validate and gossip the received votes from step 3 for a fixed time period.
After this period expires, each node computes the output of the GC protocol, $(v,b)$,  where $v$ is a string and $b$ is a single bit.
Nodes then start the first step of the BBA* protocol by invoking the sortition algorithm.
If a node is selected as a committee member, it propagates its GC output and credential.

\begin{algorithm}[!t]
  Validate and gossip received step-3 votes for a fixed time period\;
  $(v,b)$ $\gets$ Output of GC protocol\;
  ($x_{i,4}$, $\sigma_{i,4}$) $\gets$ Sortition(4)\;
  \If{$.x_{i,4}$ $\le$  $p_c$}{
    Propagate($(v,b)$, $\sigma_{i,4}$)\;
  }
\caption{Step-4 procedure}
\label{alg:bba1}
\end{algorithm}

Starting from step 5, until a termination condition is met, nodes iterate three steps of the BBA* protocol: (i) coin-fixed-to-$0$, (ii) coin-fixed-to-$1$, and (iii) coin-fixed-to-flip.
In all three steps, nodes validate and gossip votes from the previous step for a fixed time period.
In coin-fixed-to-$0$ steps, nodes terminate if they receive more than $T \cdot \tau$ votes for $b=0$, where $\frac{2}{3} < T < 1$ is a fraction of $\tau$ that defines Algorand's voting threshold.
Otherwise, they run the sortition algorithm to check whether they are in the committee.
Nodes in the committee then calculate their new $b$ and propagate it with the credential.
The procedure is similar in coin-fixed-to-$1$ steps except that the termination condition is on $b = 1$.
\refalg{alg:bba2} and \refalg{alg:bba3} show the procedure for these steps.

\begin{algorithm}[!t]
  Validate and gossip received step-($k$ - 1) votes for a fixed time period\;
  \If{more than $T \cdot \tau$ sub-votes are received for $b=0$}{
    Terminate\;
  }

  ($x_{i,k}$, $\sigma_{i,k}$) $\gets$ Sortition($k$)\;
  \If{$.x_{i,k}$ $\le$  $p_c$}{
    $v$ $\gets$ Get $v$ from step 4\;
    $b$ $\gets$ Generate binary according to BBA*'s coin-fixed-to-$0$ step\;
    Propagate($(v,b)$, $\sigma_{i,k}$)\;
  }
\caption{Step-$k$ procedure for $5\le k \le K_{\max}$ and $k \equiv 2 \pmod 3$}
\label{alg:bba2}
\end{algorithm}

\begin{algorithm}[!t]
  Validate and gossip received step-($k$ - 1) votes for a fixed time period\;
  \If{more than $T \cdot \tau$ sub-votes are received for $b=1$}{
    Terminate\;
  }

  ($x_{i,k}$, $\sigma_{i,k}$) $\gets$ Sortition($k$)\;
  \If{$.x_{i,k}$ $\le$  $p_c$}{
    $v$ $\gets$ Get $v$ from step 4\;
    $b$ $\gets$ Generate binary according to BBA*'s coin-fixed-to-$1$ step\;
    Propagate($(v,b)$, $\sigma_{i,k}$)\;
  }
\caption{Step-$k$ procedure for $6 \le k \le K_{\max}$ and $k \equiv 0 \pmod 3$}
\label{alg:bba3}
\end{algorithm}

If nodes do not terminate in coin-fixed-to-$0$ and coin-fixed-to-$1$ steps, they run the coin-fixed-to-flip step.
In this step, nodes that are selected in the committee calculate a new $b$ and propagate it.
\refalg{alg:bba4} summarizes the procedure for coin-fixed-to-flip steps.
These steps do not have a termination condition.
Therefore, after a coin-fixed-to-flip step, nodes start running the next coin-fixed-to-$0$ step.

\begin{algorithm}[!t]
  Validate and gossip received step-($k$ - 1) votes for a fixed time period\;
  ($x_{i,k}$, $\sigma_{i,k}$) $\gets$ Sortition($k$)\;
  \If{$.x_{i,k}$ $\le$  $p_c$}{
    $v$ $\gets$ Get $v$ from step 4\;
    $b$ $\gets$ Generate binary according to BBA*'s coin-fixed-to-flip step\;
    Propagate($(v,b)$, $\sigma_{i,k}$)\;
  }
\caption{Step-$k$ procedure for $7 \le k \le K_{\max}$ and $k \equiv 1 \pmod 3$}
\label{alg:bba4}
\end{algorithm}

\end{document}